\documentclass[a4paper, twocolumn]{article}

\usepackage{authblk}

\usepackage{cite}
\usepackage{amsmath}
\usepackage{amsfonts}
\usepackage{amsthm}
\usepackage{amssymb}
\usepackage[utf8]{inputenc}
\usepackage{url}
\usepackage{hyperref}
\usepackage{graphicx}
\usepackage{physics}
\usepackage{subcaption}
\usepackage{indentfirst}
\usepackage{nccmath}

\usepackage{blindtext}

\usepackage{abstract}

\theoremstyle{plain}
\newtheorem*{proposition}{Proposition}

\makeatletter
\let\@fnsymbol\@arabic
\makeatother

\begin{document}
\date{\today}

\title{\textbf{Denoising quantum states with Quantum Autoencoders - Theory and Applications}}
\author[ ]{Tom Achache$^{\dag,\:}$\thanks{\texttt{tna2111@columbia.edu}}}
\author[$\dag$, $\star$]{Lior Horesh}
\author[$\dag$, $\star$]{John Smolin}
\affil[ ]{${}^\dag$Columbia University $\qquad$ ${}^\star$IBM Research}
%\affil[ ]{\texttt{tna2111@columbia.edu}}

\twocolumn[
  \begin{@twocolumnfalse}
\maketitle

\begin{abstract}
We implement a Quantum Autoencoder (QAE) as a quantum circuit capable of correcting Greenberger–Horne–Zeilinger (GHZ) states subject to various noisy quantum channels : the bit-flip channel and the more general quantum depolarizing channel. The QAE shows particularly interesting results, as it enables to perform an almost perfect reconstruction of noisy states, but can also, more surprisingly, act as a generative model to create noise-free GHZ states. Finally, we detail a useful application of QAEs : Quantum Secret Sharing (QSS). We analyze how noise corrupts QSS, causing it to fail, and show how the QAE allows the QSS protocol to succeed even in the presence of noise.

\bigskip

\end{abstract}

\end{@twocolumnfalse}
]

\saythanks

\section{Introduction}

Autoencoders are a powerful type of Neural Networks that can realize compact representations of data, and are often used as a denoising method \cite{autoencoders}. Following this idea, Quantum Autoencoders (QAEs) have recently emerged as a possible solution to tackle the crippling quantum noise problem, which is ubiquitous in real-device quantum computations. Here, we choose to focus on QAEs as proposed in \cite{bondarenko2019quantum}, which enable an almost perfect reconstruction of Greenberger–Horne–Zeilinger (GHZ) states subject to random bit-flips and small unitary noise. A $m$-qubit GHZ state is defined by
\begin{align}
\frac{1}{\sqrt{2}}( \ket{0}^{\bigotimes m}+\ket{1}^{\bigotimes m}).
\end{align}

Studying GHZ states is of particular relevance as they exhibit long-range entanglement, and are used in a wide range of applications. 
\medskip

Noticeably, the circuit depth of the QAEs considered in this paper is very shallow (for instance, the QAE denoising 3-qubit GHZ states only requires 4 qubits), making
them easy to implement on currently existing quantum devices.

These QAEs rely on the Quantum Neural Networks (QNNs) architecture described in \cite{beer2019efficient}, and are trained in an unsupervised learning fashion \cite{sents2019unsupervised}. While \cite{beer2019efficient} already provides an implementation of QNNs in Matlab \cite{deepqnn}, we choose here to implement the QAEs in a pure quantum framework, using the Qiskit \cite{qiskit} library. 

In this study, we propose an end-to-end quantum (circuit) encoding of all the algorithmic steps associated with QAEs training and testing. Further, we discuss important implementation and training details accompanied by extended results of the QAEs. Lastly, we perform a thorough mathematical analysis of the impact of noise in Quantum Secret Sharing (QSS) and show how to counter it. Hence, this work covers both theoretical and experimental aspects of QAEs.
\medskip

This paper is organized as follows. Section \ref{Implementation} addresses the theoretical background regarding the implementation and training of QAEs on quantum circuits. Section \ref{Results} presents numerical results pertaining to the performance of the QAEs when denoising corrupted states subject to different kinds of noise. Eventually, in section \ref{QSS} we discuss an application of QAEs in QSS, and demonstrate their significance in that context.

\section{Algorithmic Description} \label{Implementation}
%\makebox[0.957\columnwidth][s]{This section summarizes the theory for imple-}
%\newpage
%\noindent menting a QNN on a quantum computer, with particular attention to the specific behavior of QAEs. The critical steps of the computation are described below, but the reader interested in the details of the calculus is referred to \cite{beer2019efficient}.
This section summarizes the theory for implementing a QNN on a quantum computer, with particular attention to the specific behavior of QAEs. The critical steps of the computation are described below, but the reader interested in the details of the calculus is referred to \cite{beer2019efficient}.

\subsection{Circuit design}

The QAE used in this study was selected from \cite{beer2019efficient} for its appealing and practical property of being able to be trained and tested using a single quantum circuit and a minimal number of qubits. We denote by $[m_1, \dots, m_\ell]$ a QNN with $\ell$ layers, each layer having size $m_i, 1\leq i \leq \ell$ (thus the input of the QNN is a quantum state of $m_1$ qubits). For such a QNN, the quantum circuit is made up of $Q$ qubits, where 
\begin{align}
    &Q = 1 + m_1 + w \\
    \text{ with } &w = \max\limits_{1 \leq i \leq \ell - 1}(m_i + m_{i+1}).
\end{align}

$w$ is the width of the QNN, and the other bits provide a measure of the fidelity (see Eq. \eqref{eq:fid}) between the results of the QNN and the target states during the training phase. The fidelity between a state $\ket{\phi}$ and a density matrix $\rho$ is defined as
\begin{align}
    F(\ket{\phi}, \rho) = \ev{\rho}{\phi}.
    \label{eq:fid}
\end{align}

It should be underlined that once the training has been completed, we only use the QAE, which has $w$ qubits. An example of the complete circuit is represented in Figure \ref{fig:QNN}. We divided it in four parts, separated with vertical grey lines ("barriers"). The first part of the network represents the states' preparation, the second part contains the QAE itself, the third part allows us to compute the fidelity between the output of the QAE and the target state, as demonstrated in \cite{beer2019efficient}, and the last part is the measurement. If we denote $\rho$ the density matrix of the state output by the QAE, $\ket{\phi}$ the target state, and $p_0$ the probability of getting $0$ when doing the measurement, we then have 
\begin{align}
p_0 = \frac{1}{2}(1 + F(\ket{\phi}, \rho)),
\label{eq:p0}
\end{align}
where $F$ is the fidelity function defined in Eq. \eqref{eq:fid}.

\begin{figure*}
\centering
\includegraphics[scale = 0.4]{./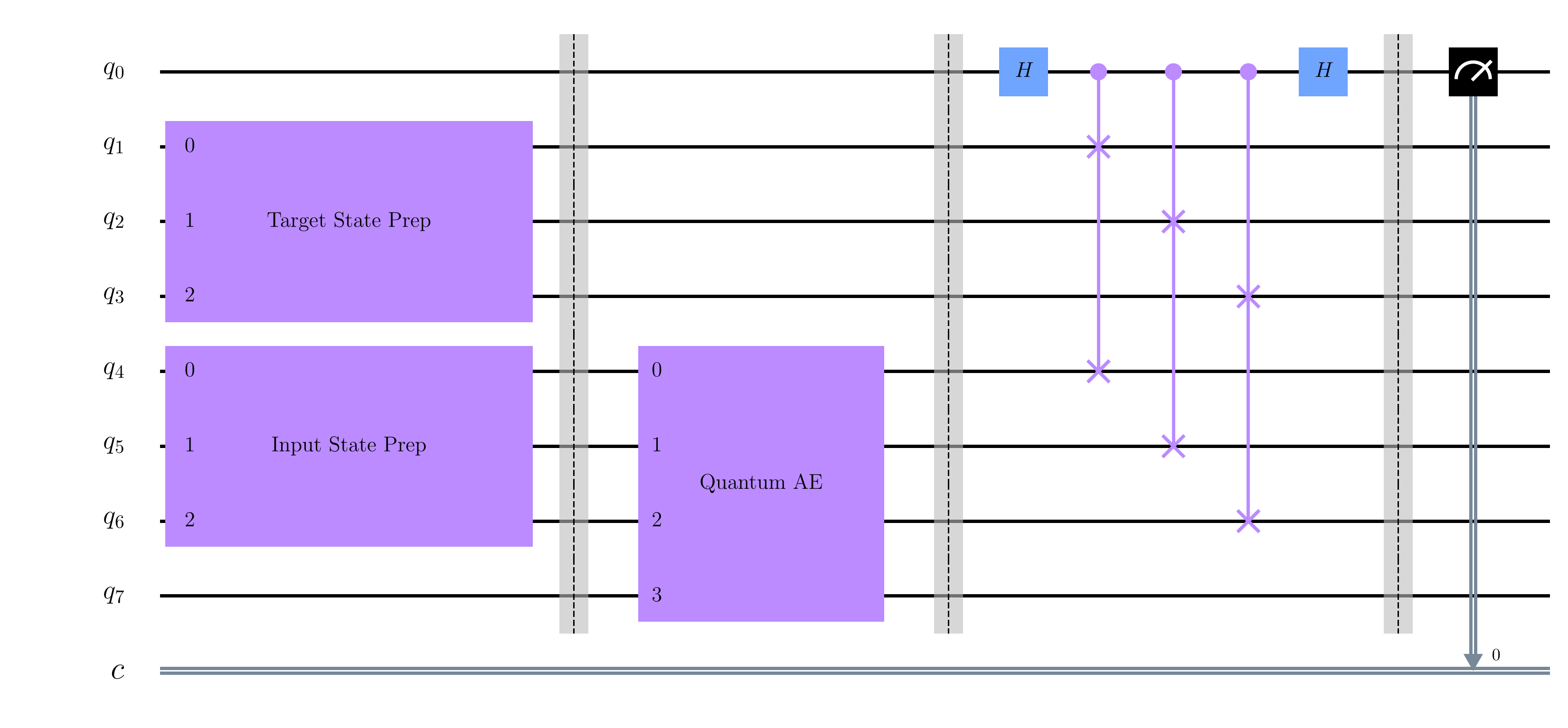}
\caption{Circuit for the training of a [3,1,3] QAE. Here, $w = 4$ and $Q = 8$.}
\label{fig:QNN}
\end{figure*}

The QAE circuit is represented in Figure \ref{fig:QAE}. It has two internal layers, separated by a barrier.

\begin{figure}[ht]
\centering
\includegraphics[scale = 0.3]{./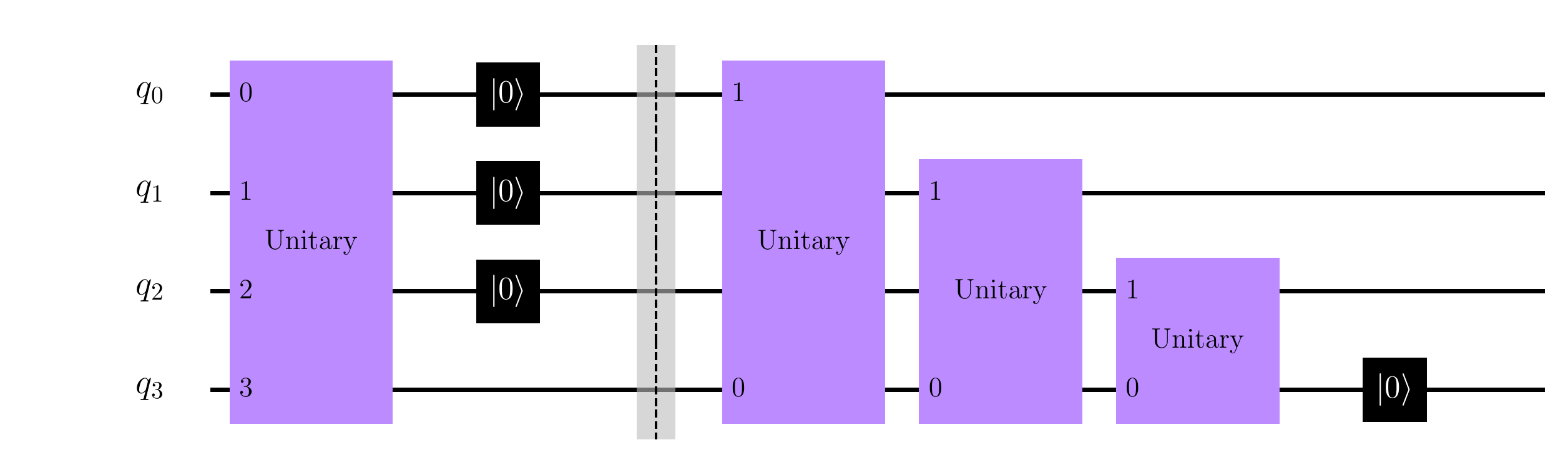}
\caption{Circuit of a [3,1,3] QAE}
\label{fig:QAE}
\end{figure}

As explained in \cite{beer2019efficient}, a QNN is composed of layers of qubits (replacing the neurons in classical NNs), connected by unitary transformations (replacing the weights). Each unitary transformation acts on all the qubits of the input layer and one qubit of the output layer. This is equivalent to classical NNs where each neuron at a certain layer is connected to all the neurons of the previous one. Thus, there are as many unitary transformations connecting two layers as there are qubits in the output layer. Moreover, each unitary transformation $U$ is defined by
\begin{align}
U = e^{iK} \text{ with } K = \sum\limits_{\sigma \in P^{\otimes (m + 1)}} k_\sigma \cdot \sigma,
\label{eq:unitaries}
\end{align}
where the $k_\sigma$ are the parameters to be learned, $m$ is the size of the input layer, $P = \{I, X, Y, Z \}$ is the set of Pauli matrices and the identity, and $P^{\otimes j}$ denotes the set of all possible tensor products of length $j$ between the elements of $P$. For instance, $P^{\otimes 2} = \{II, IX, IY, IZ, XI, XX, XY, XZ, \ldots \}$. Since $P$ forms a basis for the real vector space of $2 \times 2$ Hermitian matrices, $P^{\otimes (m + 1)}$ naturally forms a basis for the real vector space of $2^{m+1} \times 2^{m+1}$ Hermitian matrices. Hence, $K$ is uniquely defined by its coefficients $k_\sigma$.

As a reminder, the expressions of the Pauli matrices are
\begin{align}
X = \begin{pmatrix}
0 & 1 \\ 1 & 0
\end{pmatrix}, Y = \begin{pmatrix}
0 & -i\\ i & 0
\end{pmatrix}, Z = \begin{pmatrix}
1 & 0 \\ 0 & -1
\end{pmatrix}
\end{align}

Thus, between two layers of size $m_i$ and $m_{i+1}$, we will have $m_{i+1}$ unitary transformations, each having $4^{m_i + 1}$ coefficients. Hence, for a circuit of shape $[m_1,\ldots, m_\ell]$, the total number of coefficients trained is 
\begin{align}
\sum\limits_{i = 1}^{\ell-1} m_{i+1} \cdot 4^{m_i + 1}.
\label{eq:len}
\end{align}

In order to use the minimum possible amount of qubits for the QAE circuit, we can reuse the qubits of the precedent layers by resetting them, as illustrated in Figure \ref{fig:QAE} (the $\ket{0}$'s). In practice, at all times during training, we only need the qubits representing two consecutive layers. So we can use a fixed number of qubits by resetting the ones used in the past layers. However, one has to be particularly careful at the end of the circuit to identify the qubits where the states is actually encoded.

\subsection{Training}

The aim of the training part is to maximize the fidelity between the output of the QAE and the actual state(s) we are trying to reconstruct (for instance, the noise-free GHZ states).

As in a regular Neural Network, we have $N$ training pairs, which here consist in two circuits preparing the input and target states (see Figure \ref{fig:QNN}). Since our aim is to denoise GHZ states, it is not possible to set them as the targets of our network, because that would imply that actual access to them is viable. Rather, we train our QAE in an unsupervised way, similarly to the training of classical AEs. However, classical AEs are trained with pairs $(x,x)$, i.e. where the input and the target are identical. This method does not apply here, as we do not have access to the specific noise affecting a state. Therefore, we train our QAE on pairs $(x,y)$, where $x$ and $y$ are drawn from the same noisy distribution.
\medskip

The cost function $C$ is a function of the vector $\kappa$ of all the parameters $k_\sigma$. We denote by $\rho_i^\kappa$ the output of the QAE with parameters $\kappa$, when fed with the $i$-th training state $\ket{\phi}_i$. We have 
\begin{align}
C(\kappa) = \frac{1}{N} \sum\limits_{i=1}^N F(\ket{\phi}_i, \rho_i^\kappa).
\end{align}

The network is trained using a gradient ascent method. To approximate the gradient of the cost function, we use the finite difference method, due to its simplicity. However, this method can be rather unstable numerically in the case of noisy functions, and one might want to consider other Derivative Free Optimization methods instead \cite{conn2009introduction}. 

Let $\epsilon >0$, for every parameter $k_\sigma$, if $\epsilon$ is small enough,
\begin{align}
\frac{\partial C}{\partial k_\sigma} (\kappa) \simeq \frac{C(\kappa + \epsilon_{k_\sigma}) - C(\kappa)}{\epsilon},
\end{align}
where $\epsilon_{k_\sigma}$ is an indicator vector with $\epsilon$ at the position corresponding to $k_\sigma$, and $0$ otherwise. 
We can then update the parameters vector $\kappa$ with the classical rule
\begin{align}
\kappa \longleftarrow \kappa + \eta  \frac{\partial C}{\partial k_\sigma} (\kappa),
\end{align}
where $\eta$ is the learning rate.
\medskip 

Hence, we implemented the training algorithm as follows : at each epoch, we start by computing $C(\kappa)$ (with the circuit depicted in Figure \ref{fig:QNN}), and then, for each coefficient $k_\sigma$, we compute $C(\kappa + \epsilon_{k_\sigma})$. Finally, we update $\kappa$.

In terms of complexity, we need to make $N \cdot (\vert \kappa \vert + 1) \cdot S$ measurements, where $\vert \kappa \vert$ is the value defined in Eq. \eqref{eq:len} and $S$ is the number of times we run each circuit in order to estimate the probability of getting 0 as a result (and thus the fidelity, through Eq. \eqref{eq:p0}). Choosing a large $S$ may unnecessarily increase the computation time. Conversely, choosing a small $S$ may result in a poor approximation of $p_0$ (and thus of the fidelity) which will hinder the training performance. We decided to use $S=1000$ in all simulations, as it offers a sensible approximation of $p_0$ while keeping the number of measurements reasonable.

All the measurements can be done sequentially. This confers its practical interest to the QAE as it can be implemented and trained using a single and relatively small quantum circuit. However, running measurements in parallel can significantly decrease the training time of the algorithm, as each epoch can be entirely parallelized.  We included both options in our code.
\medskip

Besides, we need to tune two hyperparameters in the training phase, $\epsilon$ (the differential length) and $\eta$ (the learning rate), which can be prohibitive. However, while the success of the training phase is dependent on these two parameters, in practice $\epsilon = 0.1$ and $\eta = 1/4$ gave very good results in all cases. We also implemented Adam gradient ascent \cite{kingma2014adam} as it should be able to curb the significance of $\eta$ due to its adaptive learning rate. However, using the vanilla gradient ascent with the value of $\eta$ stated yielded more than satisfying results.

\section{Results} \label{Results}

Due to the high number of coefficients (see Eq. \eqref{eq:len}) and training states needed to properly train the QAE, the training phase is heavily time-expensive. Indeed, we were able to obtain almost-perfect results (fidelity above 0.95) on [2,1,2] and [3,1,3] QAEs, but the resources needed for the training of larger networks exceeded our capacity. But the method should work in theory, independently of the size of the network. \cite{bondarenko2019quantum} shows the results on a [4,2,1,2,4] network.

\subsection{Bit-flip channel}

We first tested our QAE on the task of denoising GHZ states submitted to random bit-flips. In our model, every bit of the state is flipped with a certain probability $p$.

We trained a $[2,1,2]$ QAE with 100 training pairs, drawn by taking GHZ states submitted to a bit-flip probability $p = 0.2$. The training curve is shown in Figure \ref{fig:train_bitflip}. At the end of each epoch we computed the fidelity on the training pairs and on validation pairs (100 pairs, drawn from the same distribution as the training pairs). As we can observe by the small discrepancies between the training and the validation sets performance in Figure \ref{fig:train_bitflip}, overfitting is not a concern here. Indeed, the validation set performance closely follows the performance of the training set, and in particular does not start decreasing at some point.

\begin{figure}[ht]
\centering
\includegraphics[scale = 0.5]{./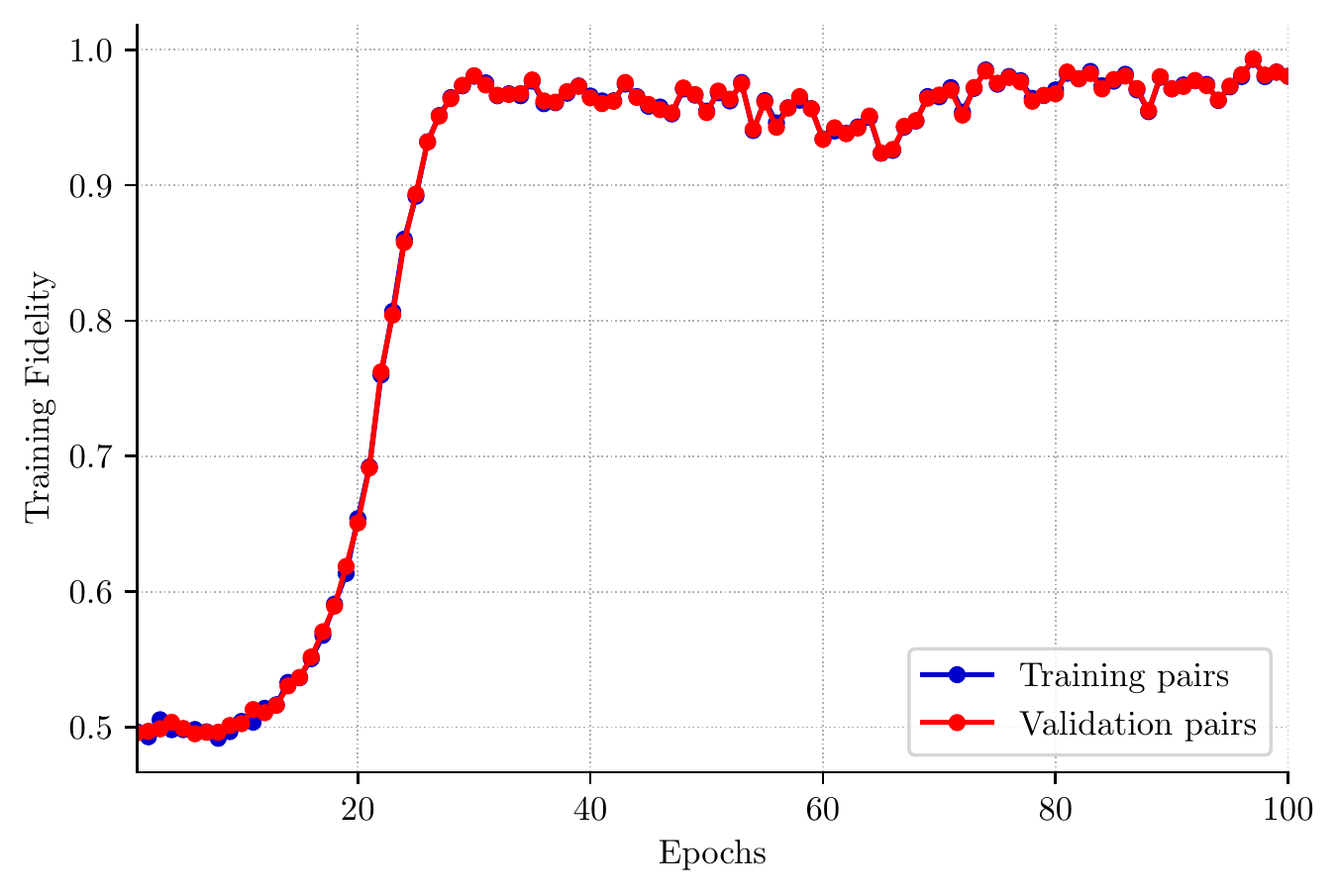}
\caption{Training fidelity of a $[2,1,2]$ QAE using 100 training pairs, for a bit-flip noise of probability $p=0.2$. We used $\epsilon = 0.1$ and $\eta = 1/4$.}
\label{fig:train_bitflip}
\end{figure}

We then tested our QAE on a test set consisting of 200 pairs drawn from the same distribution (GHZ states affected by bit-flips with probability 0.2). The average fidelity of the test set was 0.99, demonstrating that our QAE is working (almost) perfectly.
\medskip

Thereafter, we evaluated whether the QAE trained for $p=0.2$ would maintain its performance for larger bit-flip probabilities (i.e. with a stronger noise). The results are displayed in Figure \ref{fig:fid_bitflip}. The theoretical fidelity of noisy states for a probability $p$ can be computed by noting that when submitting a two-qubit GHZ state to random bit-flips, there are only two possible resulting states : 
\begin{align}
    \frac{1}{\sqrt{2}}(\ket{00} + \ket{11}) \text{ and } \frac{1}{\sqrt{2}}(\ket{01} + \ket{10}). \label{eq:bitflip}
\end{align}

The fidelities of these states with the GHZ state are respectively 1 and 0. Moreover, to get the first state (the GHZ state), either zero or both qubits have to be flipped. Hence, the theoretical fidelity is simply 
\begin{align}
(1 - p)^2 + p^2.
\label{eq:proba_stateflip}
\end{align}

\begin{figure}[ht]
\centering
\includegraphics[scale = 0.5]{./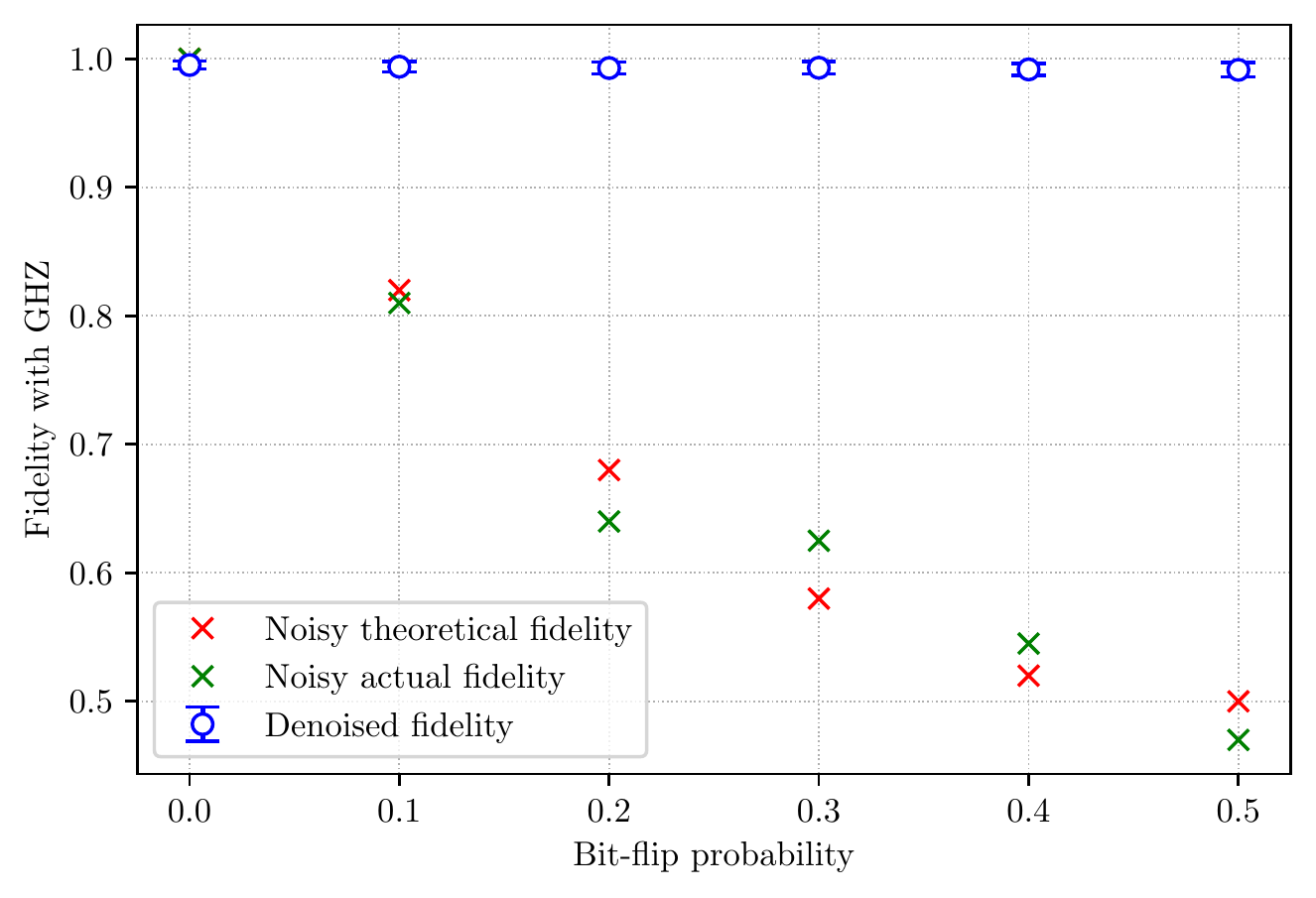}
\caption{Mean and deviation of the fidelity of the $[2,1,2]$ network, trained with $p = 0.2$, over different bit-flip probabilities. For each $p$, 200 test states were drawn. The green points are the fidelities of the test states. The red points are their theoretical fidelities.}
\label{fig:fid_bitflip}
\end{figure}

\begin{figure*}
\begin{subfigure}{.33\textwidth}
  \centering
  \includegraphics[width=0.9\textwidth]{./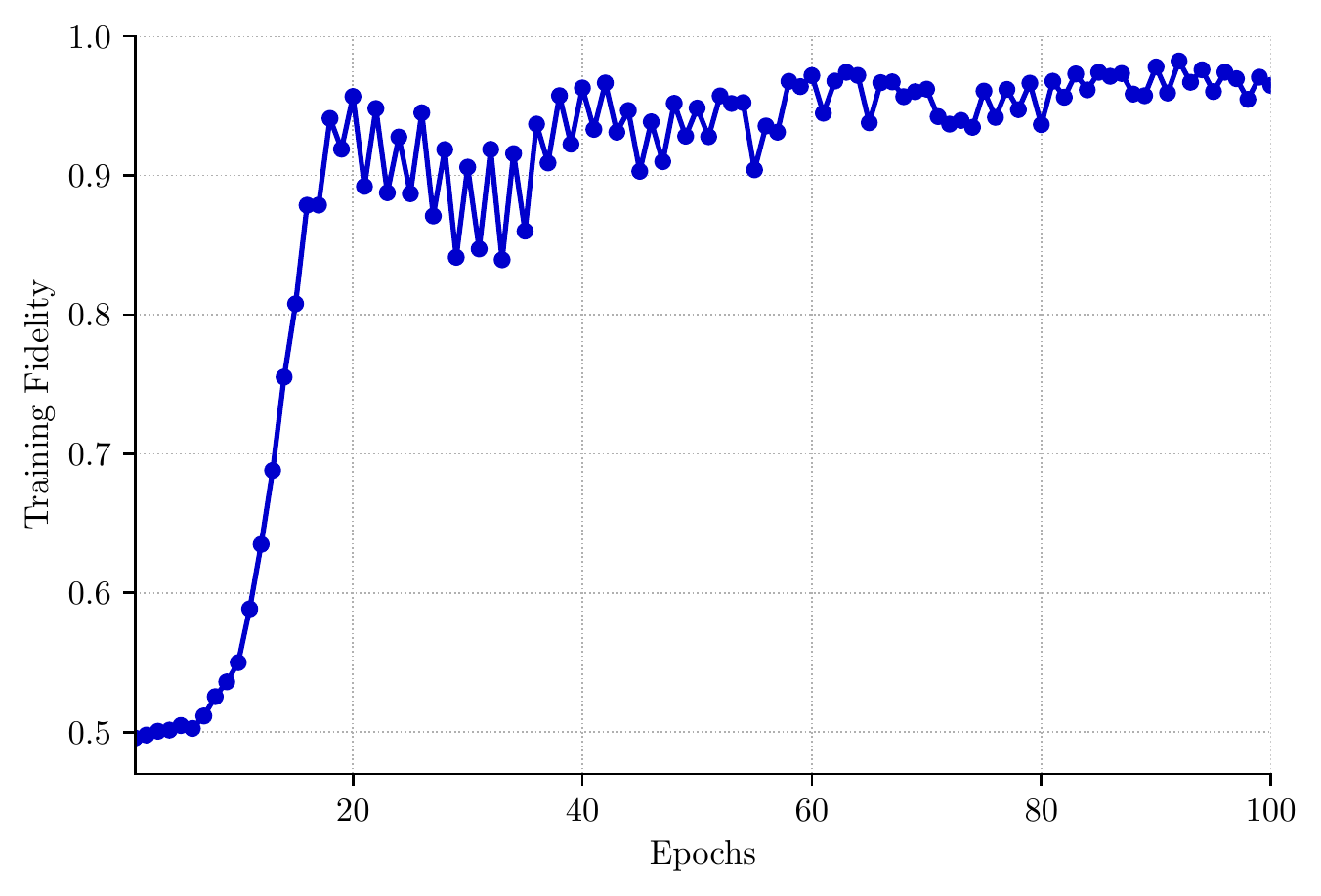}
  \caption{$p = 0.1$}
  \label{fig:a}
\end{subfigure}
\begin{subfigure}{.33\textwidth}
  \centering
  \includegraphics[width=0.9\textwidth]{./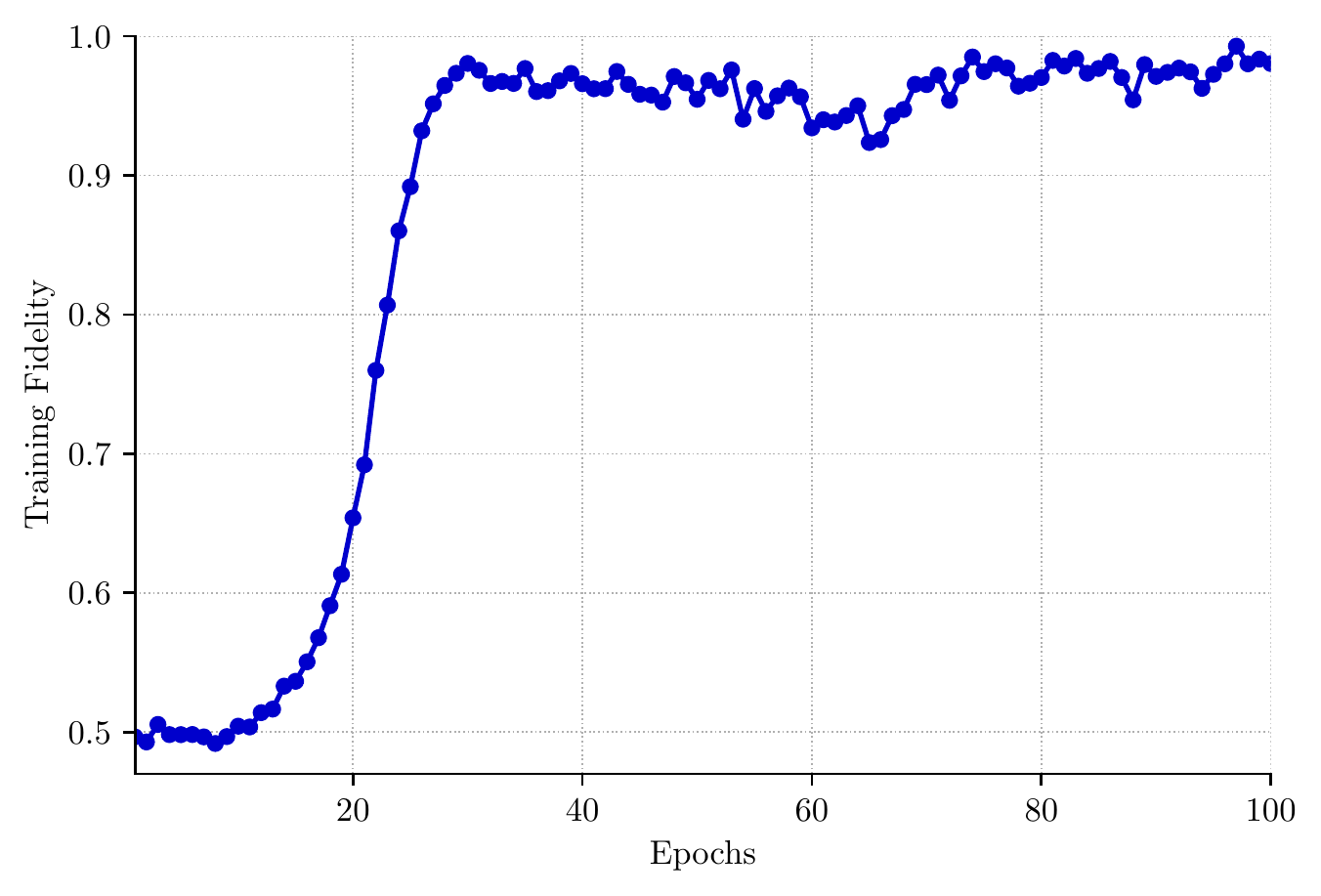}
  \caption{$p = 0.2$}
\end{subfigure}
\begin{subfigure}{.33\textwidth}
  \centering
  \includegraphics[width=0.9\textwidth]{./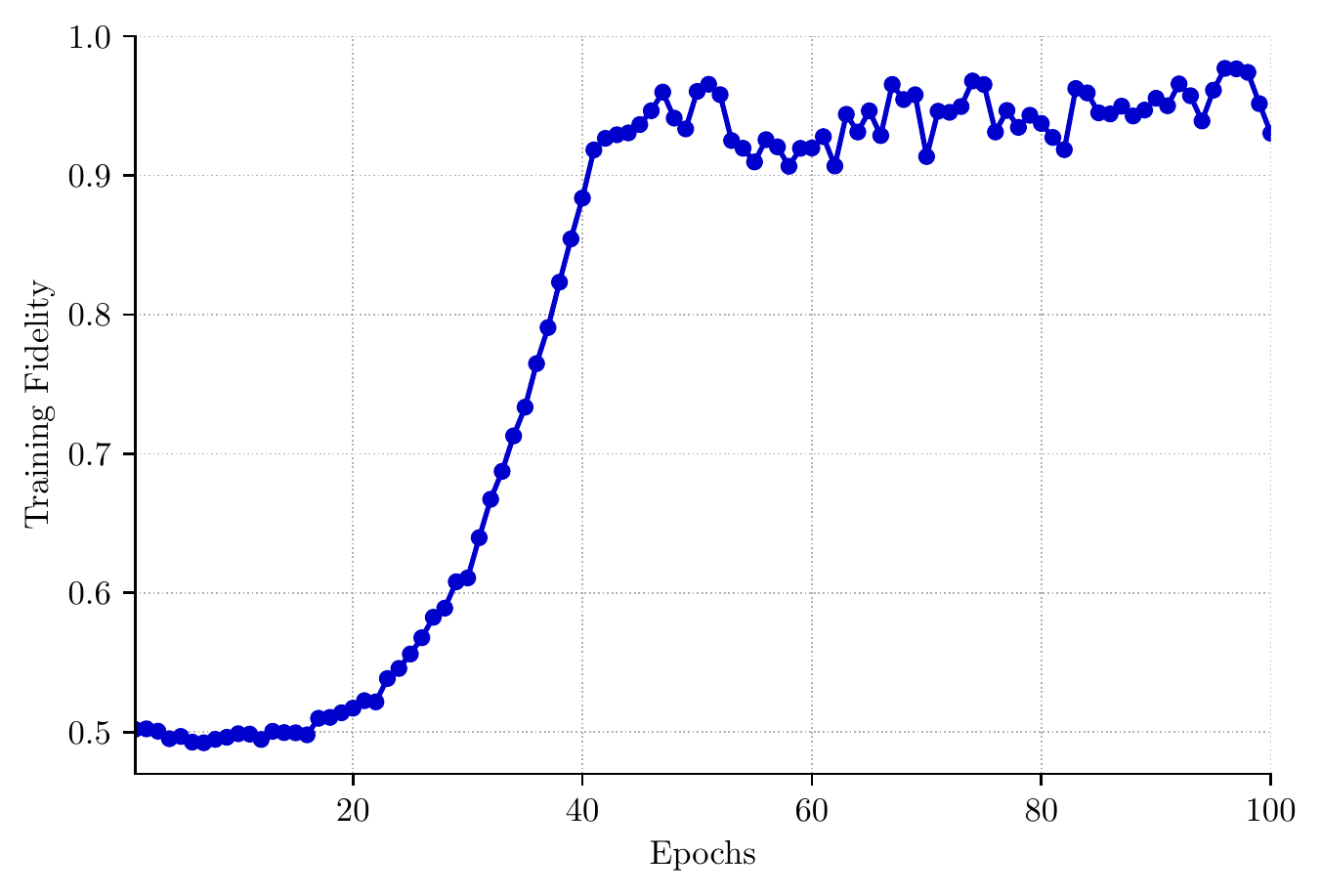}
  \caption{$p = 0.3$}
\end{subfigure}
\begin{subfigure}{.33\textwidth}
  \centering
  \includegraphics[width=0.9\textwidth]{./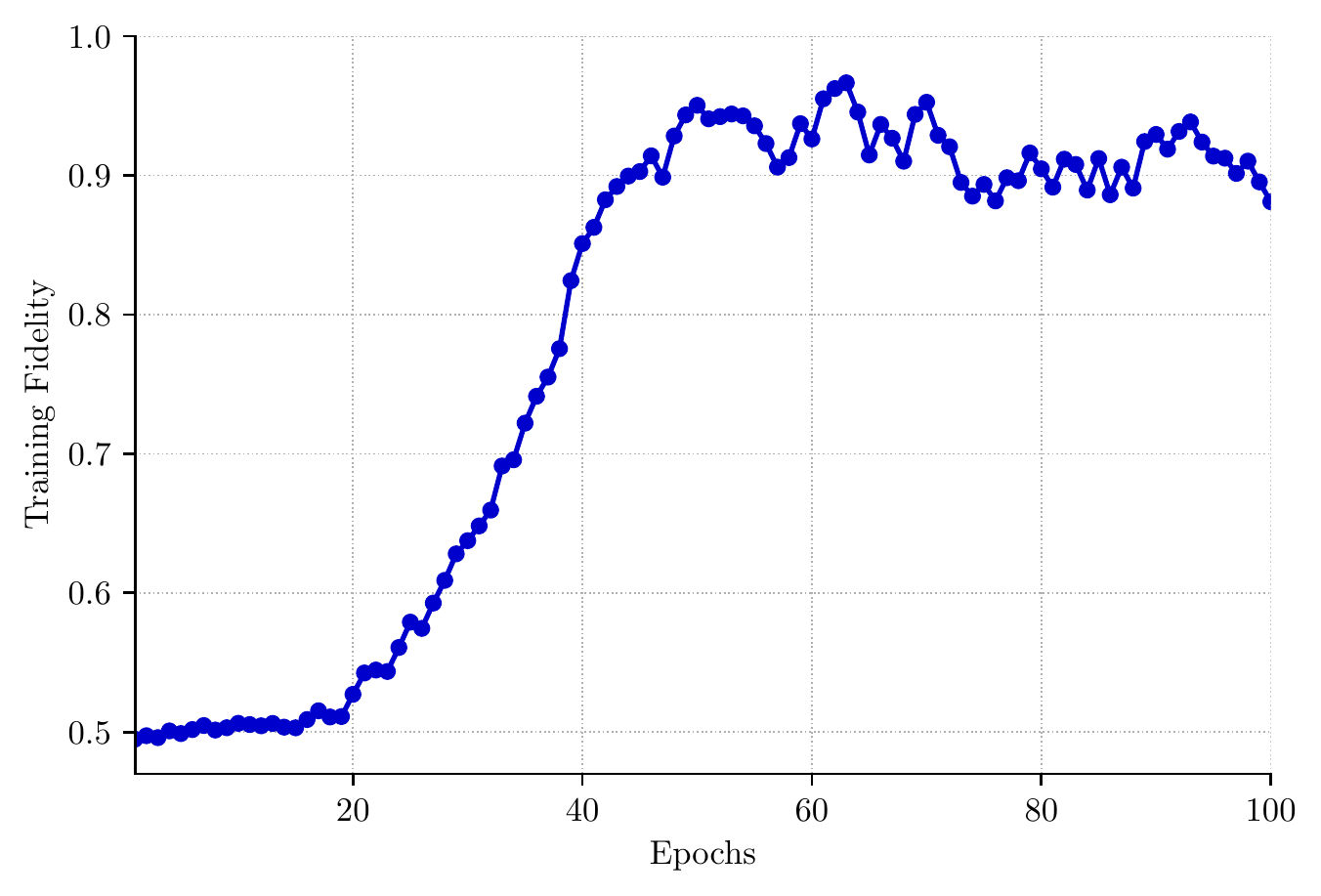}
  \caption{$p = 0.37$}
  \label{fig:d}
\end{subfigure}
\begin{subfigure}{.33\textwidth}
  \centering
  \includegraphics[width=0.9\textwidth]{./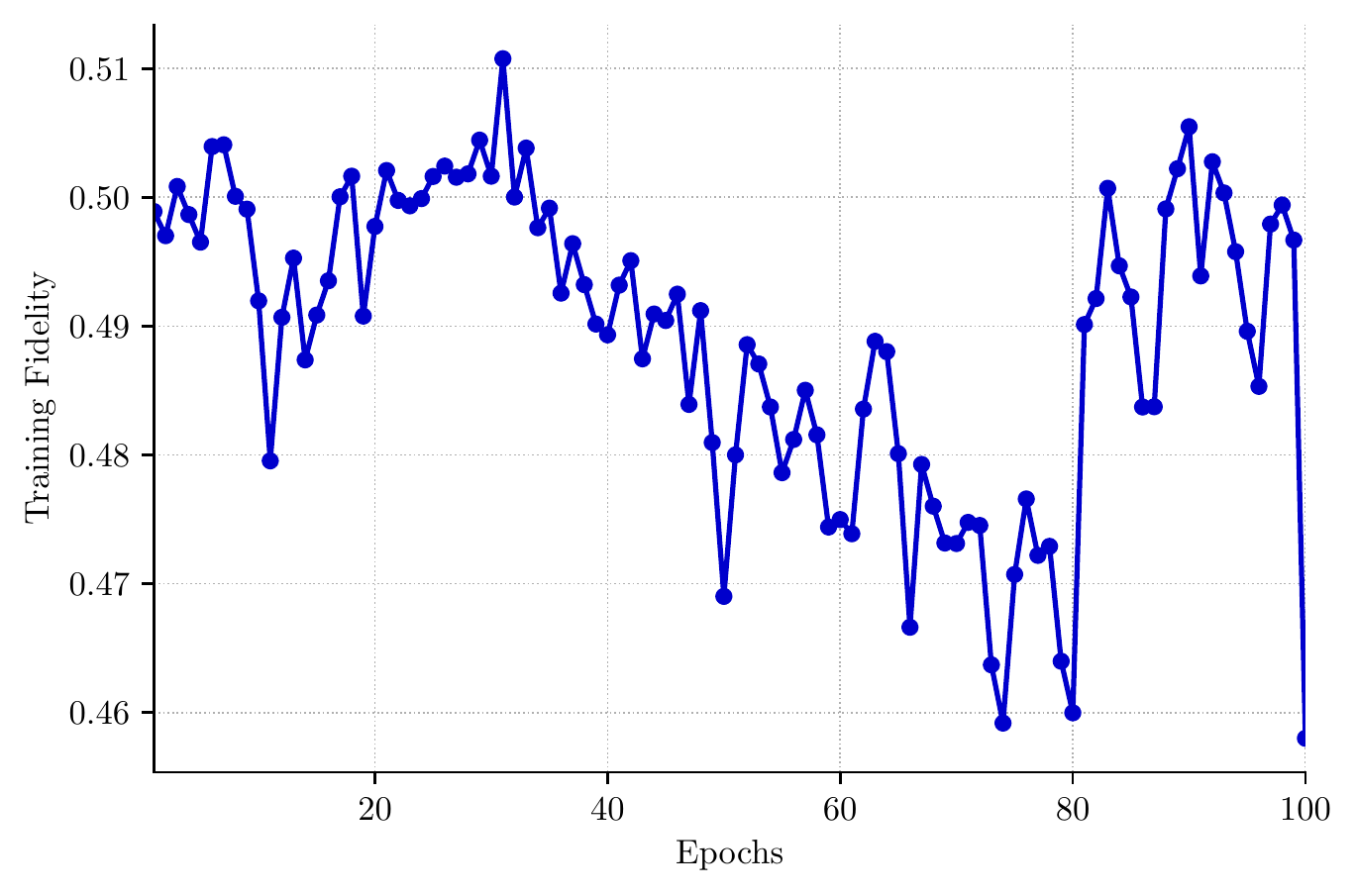}
  \caption{$p = 0.4$}
  \label{fig:e}
\end{subfigure}
\begin{subfigure}{.33\textwidth}
  \centering
  \includegraphics[width=0.9\textwidth]{./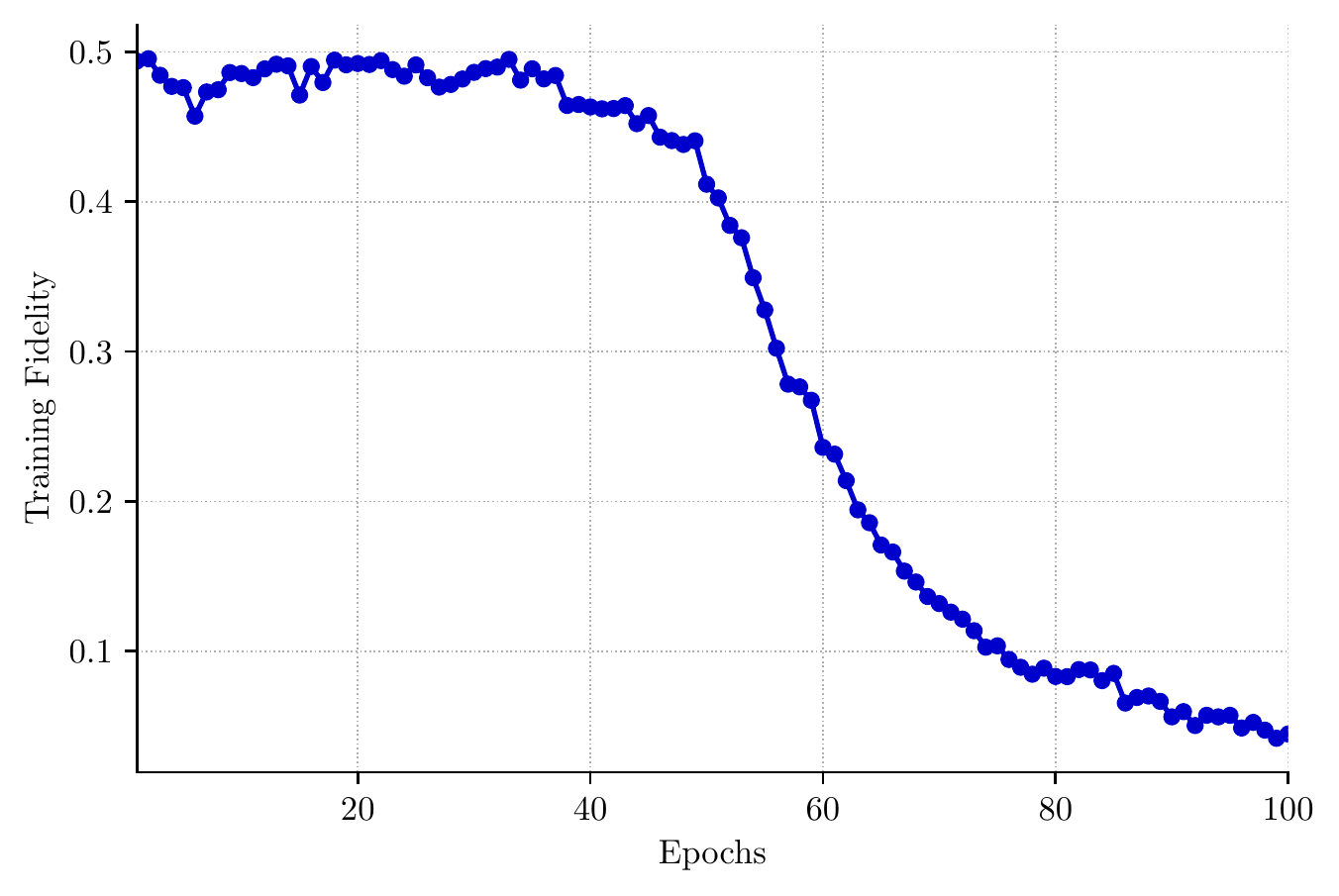}
  \caption{$p = 0.4$}
  \label{fig:f}
\end{subfigure}
\caption{Training fidelities of a [2,1,2] QAE trained with pairs affected by the bit-flip channel for various strengths. In all experiments, we trained on 100 pairs, using $\epsilon = 0.1$ and $\eta = 1/4$. We can indeed see that the larger $p$ is, the slower the convergence. For instance, one can compare the epoch at which the fidelity crosses the 0.9 threshold.}
\label{fig:all_train_bitflip}
\end{figure*}

Thus, our QAE almost perfectly reconstructs GHZ states affected by bit-flips of arbitrary probability.
\medskip

Since increasing the bit-flip probability does not affect our QAE's performance, one could wonder whether the strength of the bit-flip channel (i.e. the bit-flip probability) is totally irrelevant in the whole experiment. Actually, it is not the case. The strength of the channel plays a critical role in the training part : the stronger the noise, the slower the convergence. And if the noise is too strong, the QAE might never converge to the desired state. For instance, if $p$ is too large, we will have more $\frac{1}{\sqrt{2}}(\ket{01} + \ket{10})$ states than GHZ states, thus the QAE will erroneously conclude that this is the original state to learn. This is represented in Figure \ref{fig:all_train_bitflip}.

In sub-figures \ref{fig:a} to \ref{fig:d}, the QAE is able to achieve an almost-perfect fidelity. However, in sub-figures \ref{fig:e} and \ref{fig:f}, we observe two different behavior : the fidelity stagnates around 0.5 or goes to 0 instead of 1. This is because when $p = 0.4$, by Eq. \eqref{eq:proba_stateflip}, we have $52\%$ of GHZ state (and thus $48\%$ of the $\frac{1}{\sqrt{2}}(\ket{01} + \ket{10})$ state). Since these probabilities are very close to $50\%/50\%$, chances are we train our QAE on an almost-even number of pairs for each states, hence it will not be able to determine which state is the original one and which is the noisy one (sub-figure \ref{fig:e}). Furthermore, we may also train our QAE on more pairs of the non-GHZ state, hence it will be misled to deduce that the GHZ states are actually the noisy ones, and will convert them in the $\frac{1}{\sqrt{2}}(\ket{01} + \ket{10})$ state (sub-figure \ref{fig:f}).

\subsection{Quantum depolarizing channel}

Considering the ability of our QAE to denoise states corrupted by bit-flips, we further attempted to denoise states affected by the quantum depolarizing channel (QDC), which is a more general form of noise. For an input state with density matrix $\rho$ representing $m$ qubits, the QDC with noise strength $p$ is defined as
\begin{align}
    \rho \xrightarrow[]{QDC} (1-p) \rho + p  
    \big( \frac{1}{m} \cdot \large{\textbf{1}} \big),
\end{align}
where $\frac{1}{m} \cdot \large{\textbf{1}}$ is the density matrix of the maximally mixed state. One can show that this is equivalent to applying exactly one of the matrices $\{I,X,Y,Z\}$ (the Pauli matrices and the identity) to each qubit, with respective probabilities $\{1 - \frac{3p}{4}, \frac{p}{4}, \frac{p}{4},\frac{p}{4}\}$.
\medskip 

We trained a [2,1,2] QAE with 100 training pairs submitted to the QDC with noise strength $p=0.2$. The training fidelity is represented in Figure \ref{fig:train_depo}.

\begin{figure}[ht]
\centering
\includegraphics[scale = 0.5]{./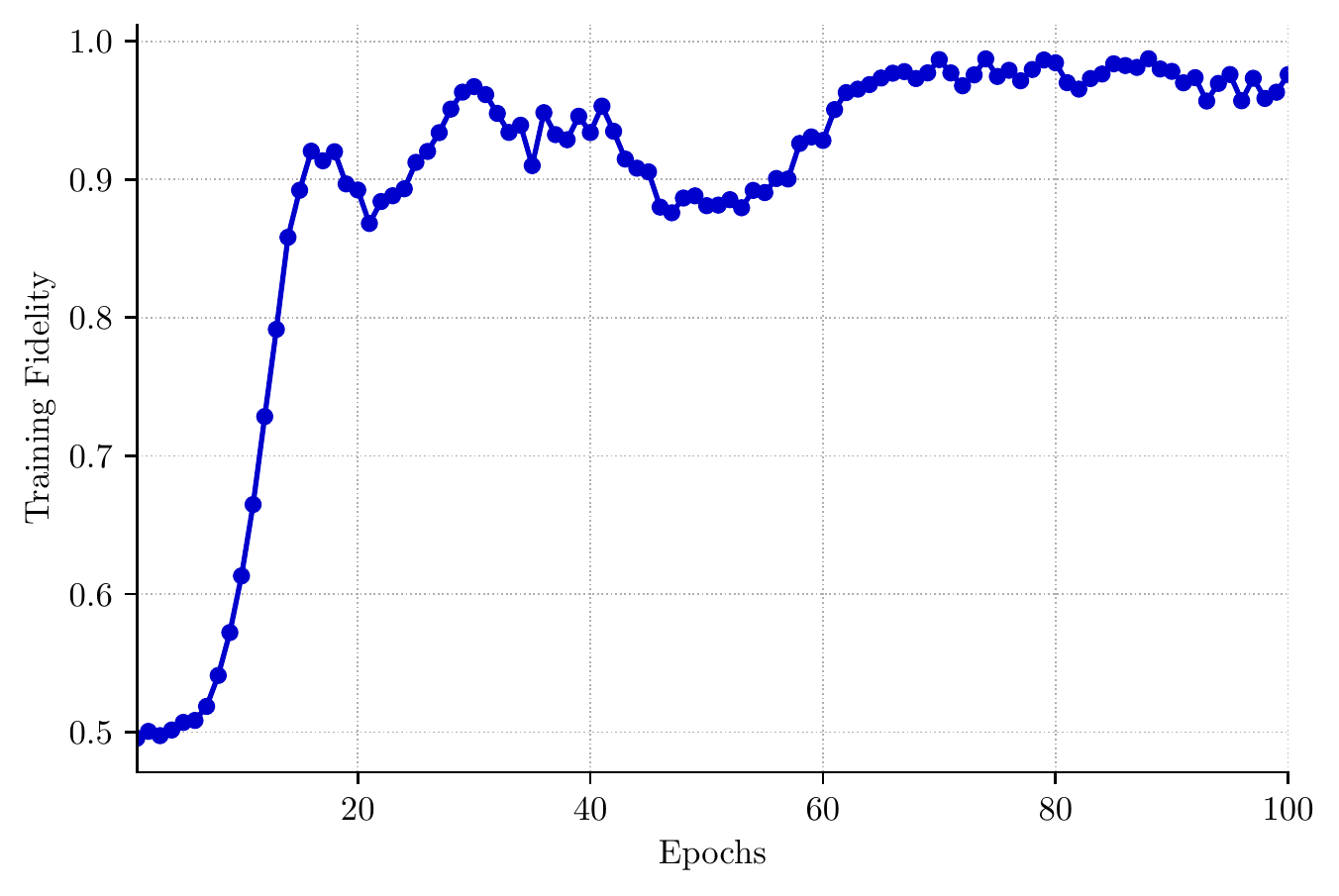}
\caption{Training fidelity of a $[2,1,2]$ QAE using 100 training pairs affected by the QDC with noise strength $p=0.2$. We used $\epsilon = 0.1$ and $\eta = 1/4$.}
\label{fig:train_depo}
\end{figure}

Once again, we tested the robustness of our model when states are corrupted with stronger noise. The results, displayed in Figure \ref{fig:fid_depo}, are impressive. As before, the theoretical fidelity can be computed as a function of the noise strength $p$. 
\begin{proposition}
For a $m$-qubit GHZ state corrupted by the QDC with noise strength $p$, the theoretical fidelity with GHZ is
\begin{align}
\sum\limits_{\substack{k=0 \\ k \text{ even}}}^m \binom{m}{k} \Big(\frac{p}{4} \Big)^k \Big(1 - \frac{3p}{4} \Big)^{m-k} + 2^{m-1} \cdot \Big(\frac{p}{4}\Big)^m.
\end{align}
\end{proposition}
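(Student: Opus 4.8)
The plan is to compute the fidelity $F(\ket{\text{GHZ}}, \rho)$ directly from the mixed-unitary (Kraus) description of the channel. Using the stated equivalence, the corrupted state is the mixture $\rho = \sum_\sigma P(\sigma)\, \sigma \ket{\text{GHZ}}\bra{\text{GHZ}} \sigma$, where $\sigma = \sigma_1 \otimes \cdots \otimes \sigma_m$ ranges over all length-$m$ Pauli strings, each factor $\sigma_j \in \{I, X, Y, Z\}$ is Hermitian, and $P(\sigma)$ is the product of the single-qubit probabilities $\{1 - \tfrac{3p}{4}, \tfrac{p}{4}, \tfrac{p}{4}, \tfrac{p}{4}\}$. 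Since each $\sigma$ is Hermitian, substituting into the fidelity definition (Eq.~\eqref{eq:fid}) gives $F = \sum_\sigma P(\sigma)\, |\bra{\text{GHZ}}\sigma\ket{\text{GHZ}}|^2$, so the whole computation reduces to evaluating the diagonal matrix element $\bra{\text{GHZ}}\sigma\ket{\text{GHZ}}$ for each string.

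First I would compute how $\sigma$ acts on the two computational components $\ket{0}^{\otimes m}$ and $\ket{1}^{\otimes m}$. Writing $a, b, c, d$ for the numbers of $I, X, Y, Z$ factors in $\sigma$ (with $a+b+c+d = m$), the operators $X, Y$ flip a qubit while $I, Z$ preserve it, so $\sigma\ket{0}^{\otimes m}$ is, up to a phase $i^c$, the basis state flipped exactly at the $X$ and $Y$ positions, and $\sigma\ket{1}^{\otimes m}$ is the complementary basis state up to a phase $(-i)^c(-1)^d$. Expanding $\bra{\text{GHZ}}\sigma\ket{\text{GHZ}}$ into its four bra-ket overlaps and using orthonormality of the computational basis, I would read off that each of the four overlaps survives only under a stringent support condition on $\sigma$.

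The key structural observation, and the step I expect to be the crux, is that the matrix element vanishes unless $\sigma$ is supported entirely on $\{I, Z\}$ (i.e. $b = c = 0$) or entirely on $\{X, Y\}$ (i.e. $a = d = 0$); any string mixing a flipping Pauli with a non-flipping one sends $\ket{\text{GHZ}}$ to a state orthogonal to it, so all four overlaps cancel or vanish. Establishing this cleanly requires careful phase bookkeeping across the four terms, which is where most of the work sits. In the all-$\{I,Z\}$ case the element equals $\tfrac{1}{2}(1 + (-1)^d)$, whose squared modulus is $1$ exactly when the number of $Z$'s is even and $0$ otherwise; in the all-$\{X,Y\}$ case it equals $\tfrac{1}{2}(i^c + (-i)^c)$, whose squared modulus is $1$ exactly when the number of $Y$'s is even.

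Finally I would assemble the two surviving families. Summing the first family over strings with an even number $k$ of $Z$'s, each contributing $(\tfrac{p}{4})^k(1 - \tfrac{3p}{4})^{m-k}$ with $\binom{m}{k}$ placements, reproduces the binomial sum in the statement. For the second family, every all-$\{X,Y\}$ string has identical probability $(\tfrac{p}{4})^m$, and a standard parity count shows that exactly $2^{m-1}$ of the $2^m$ such strings have an even number of $Y$'s, giving the term $2^{m-1}(\tfrac{p}{4})^m$. Adding the two contributions yields the claimed formula; I would also sanity-check it at $m=1$, where both the direct single-qubit computation and the formula give $1 - \tfrac{p}{2}$.
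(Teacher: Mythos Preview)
Your proposal is correct and follows essentially the same route as the paper: both identify that the only Pauli strings contributing nonzero overlap with $\ket{\text{GHZ}}$ are those supported entirely on $\{I,Z\}$ with an even number of $Z$'s, or entirely on $\{X,Y\}$ with an even number of $Y$'s, and then sum the corresponding probabilities. The paper simply asserts the orthogonality of all other outputs, whereas you supply the explicit matrix-element computation that justifies it; otherwise the arguments coincide.
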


\begin{proof}
Every state output by the QDC that is not GHZ (up to a global phase) will be orthogonal to it, hence their fidelity will be 0. And for the QDC to output the GHZ state (up to a global phase), there is only two possibilities : 
\begin{itemize}
    \item an even number of qubits were affected by $Z$ and the rest by $I$
    \item an even number of qubits were affected by $Y$ and the rest by $X$
\end{itemize}
Since the fidelity of these states with GHZ is 1, the theoretical fidelity is thus
\begin{align}
    &\sum\limits_{\substack{k=0 \\ k \text{ even}}}^m \binom{m}{k} \Big(\frac{p}{4} \Big)^k \Big(1 - \frac{3p}{4} \Big)^{m-k} \nonumber \\ +  & \sum\limits_{\substack{k=0 \\ k \text{ even}}}^m \binom{m}{k} \underbrace{\Big(\frac{p}{4} \Big)^k \Big(\frac{p}{4} \Big)^{m-k}}_{(\frac{p}{4})^m}.
\end{align}

Since $\sum\limits_{\substack{k=0 \\ k \text{ even}}}^m \binom{m}{k} = 2^{m-1}$, we get the final result.
\end{proof}

\begin{figure}[ht]
\centering
\includegraphics[scale = 0.5]{./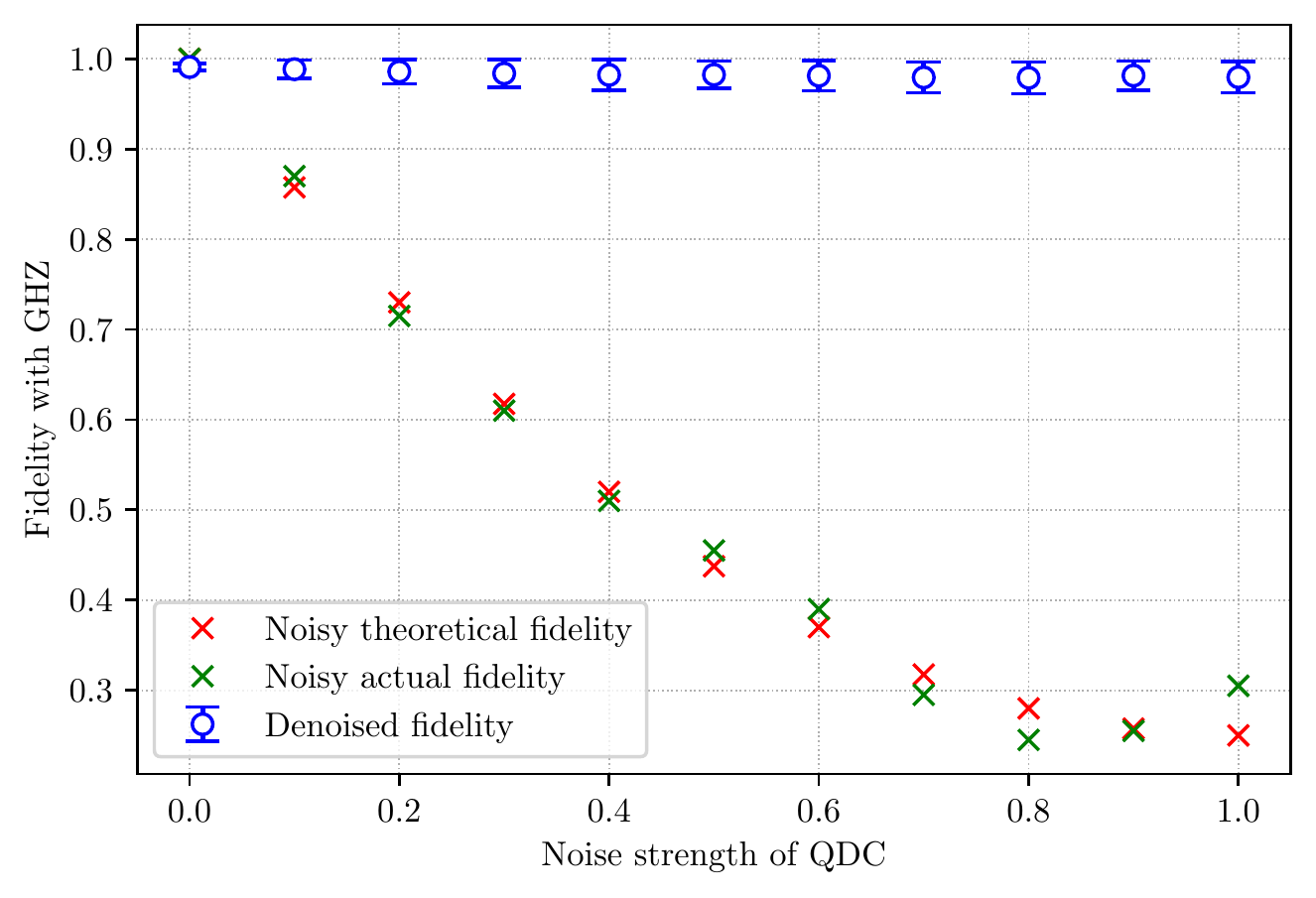}
\caption{Mean and deviation of the fidelity of the $[2,1,2]$ network, trained with $p = 0.2$, over different noise strength of the QDC. For each $p$, 200 test states were drawn. The green points are the fidelities of the test states. The red points are their theoretical fidelities.}
\label{fig:fid_depo}
\end{figure}

As it was the case with the bit-flip channel, our QAE is indifferent to the strength of the QDC and displays almost perfect reconstruction of GHZ states in every case.
\medskip

The QAE also features impressive generative abilities. Since the QAE is able to effectively recreate GHZ states independently of the noise strength, it should be able to recreate them without taking any input (i.e. starting with the state $\ket{0\cdots0}$), and thus act as a generative model for non-noisy GHZ states. We ran 200 simulations, and found that the average fidelity between the states created by our QAE and the GHZ state is 0.96 with a standard deviation of 0.01, thus supporting the above claim.
\medskip

In a second experiment, we trained a $[3,1,3]$ QAE on the same noise distribution. The training fidelity is represented in Figure \ref{fig:train_depo3}.

\begin{figure}[ht]
\centering
\includegraphics[scale = 0.5]{./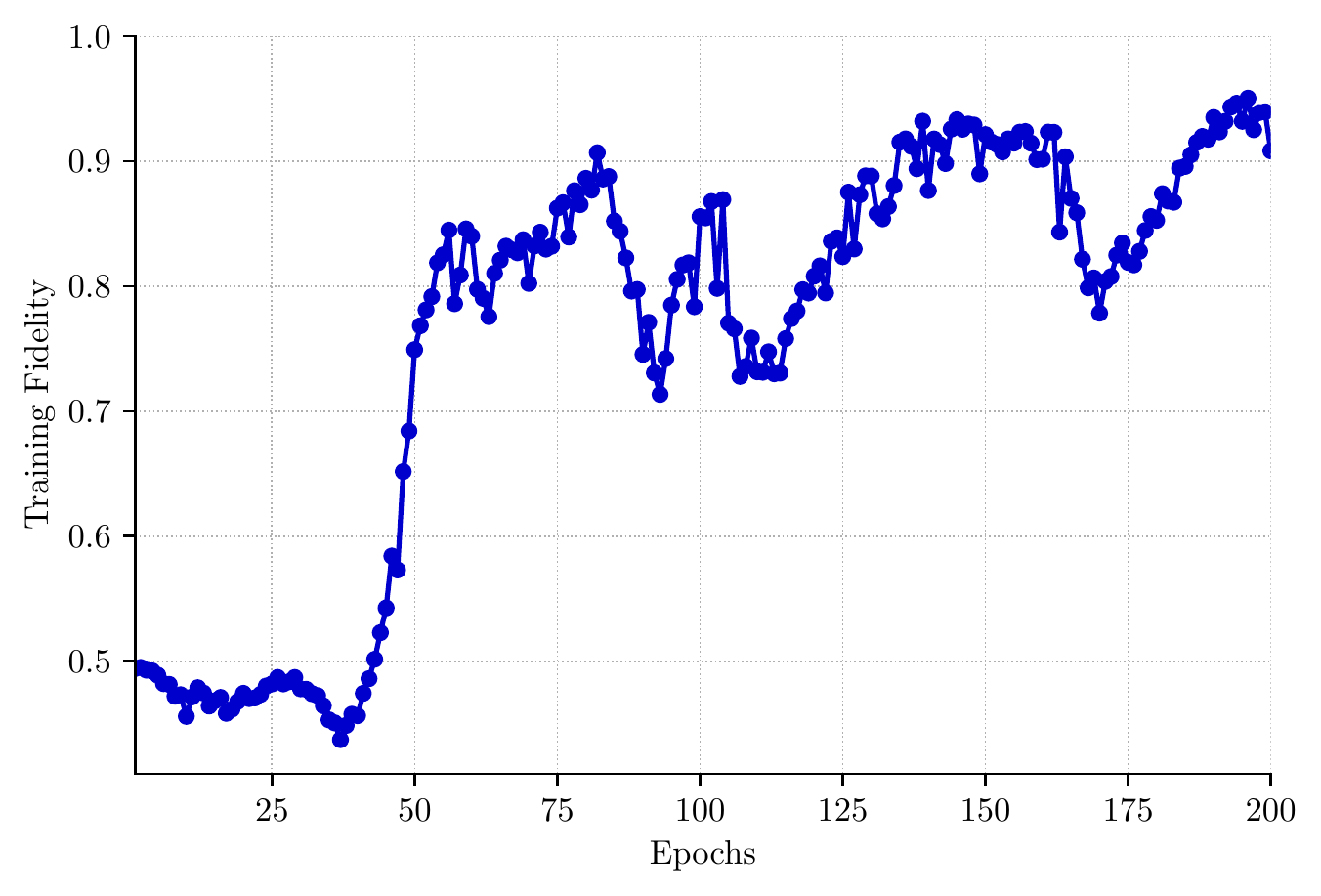}
\caption{Training fidelity of a $[3,1,3]$ QAE using 150 training pairs affected by the QDC with noise strength $p=0.2$. We used $\epsilon = 0.1$ and $\eta = 1/4$.}
\label{fig:train_depo3}
\end{figure}

The robustness of the $[3,1,3]$ QAE is displayed in Figure \ref{fig:fid_depo31}. Once again, the QAE is rather insensitive to stronger noise. However, the fidelity is not as high as with the $[2,1,2]$ QAE (between 0.9 and 0.95 here), and the standard deviation is quite large, especially for very strong levels of noise. 

\begin{figure}[ht]
\centering
\includegraphics[scale = 0.5]{./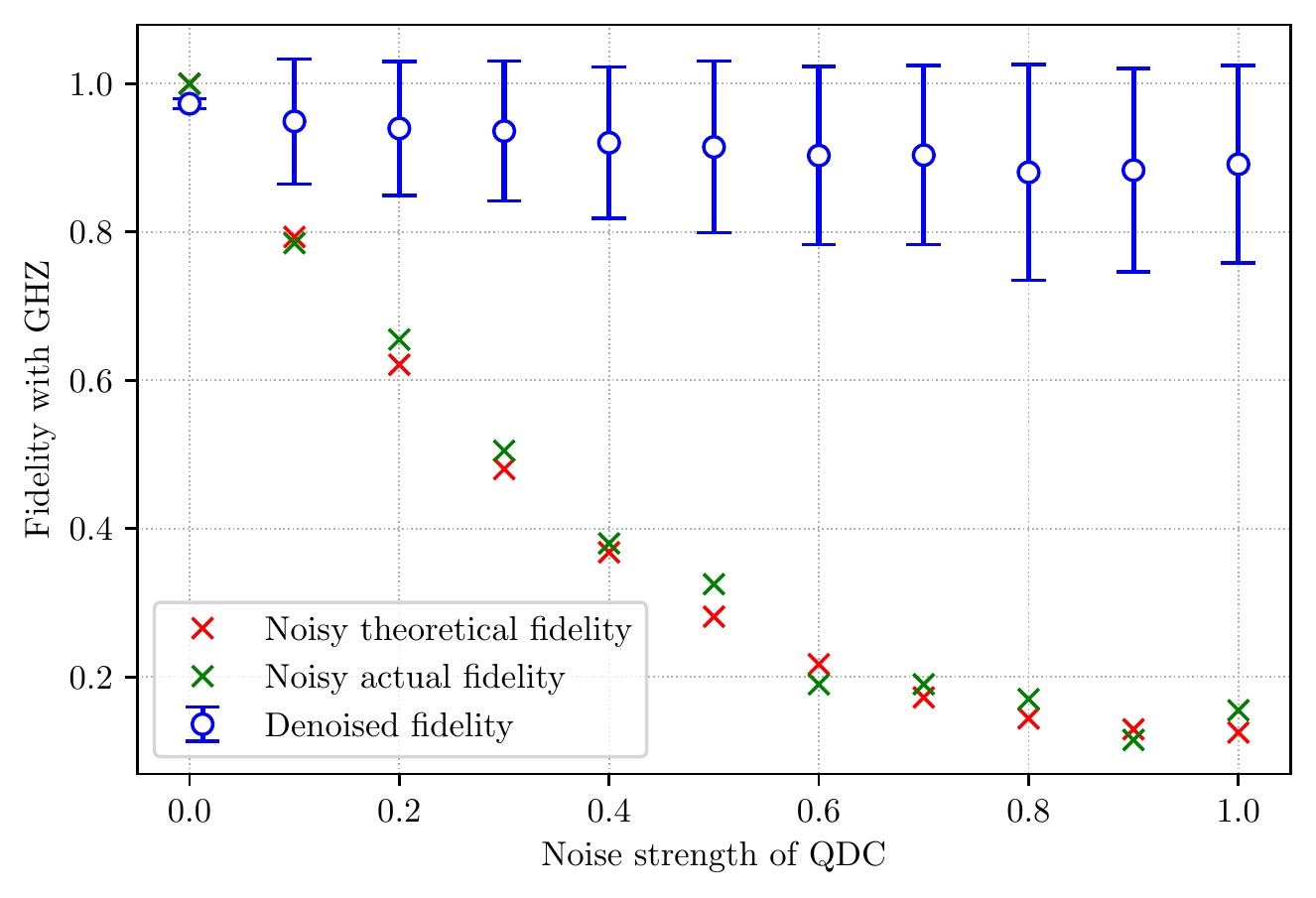}
\caption{Mean and deviation of the fidelity of the $[3,1,3]$ network, trained with $p = 0.2$, over different noise strength of the QDC. For each $p$, 200 test states were drawn. The green points are the fidelities of the test states. The red points are their theoretical fidelities.}
\label{fig:fid_depo31}
\end{figure}

To further improve the fidelity and dampen the standard deviation, we can apply our $[3,1,3]$ QAE twice. The results are presented in Figure \ref{fig:fid_depo32}.

\begin{figure}[ht]
\centering
\includegraphics[scale = 0.5]{./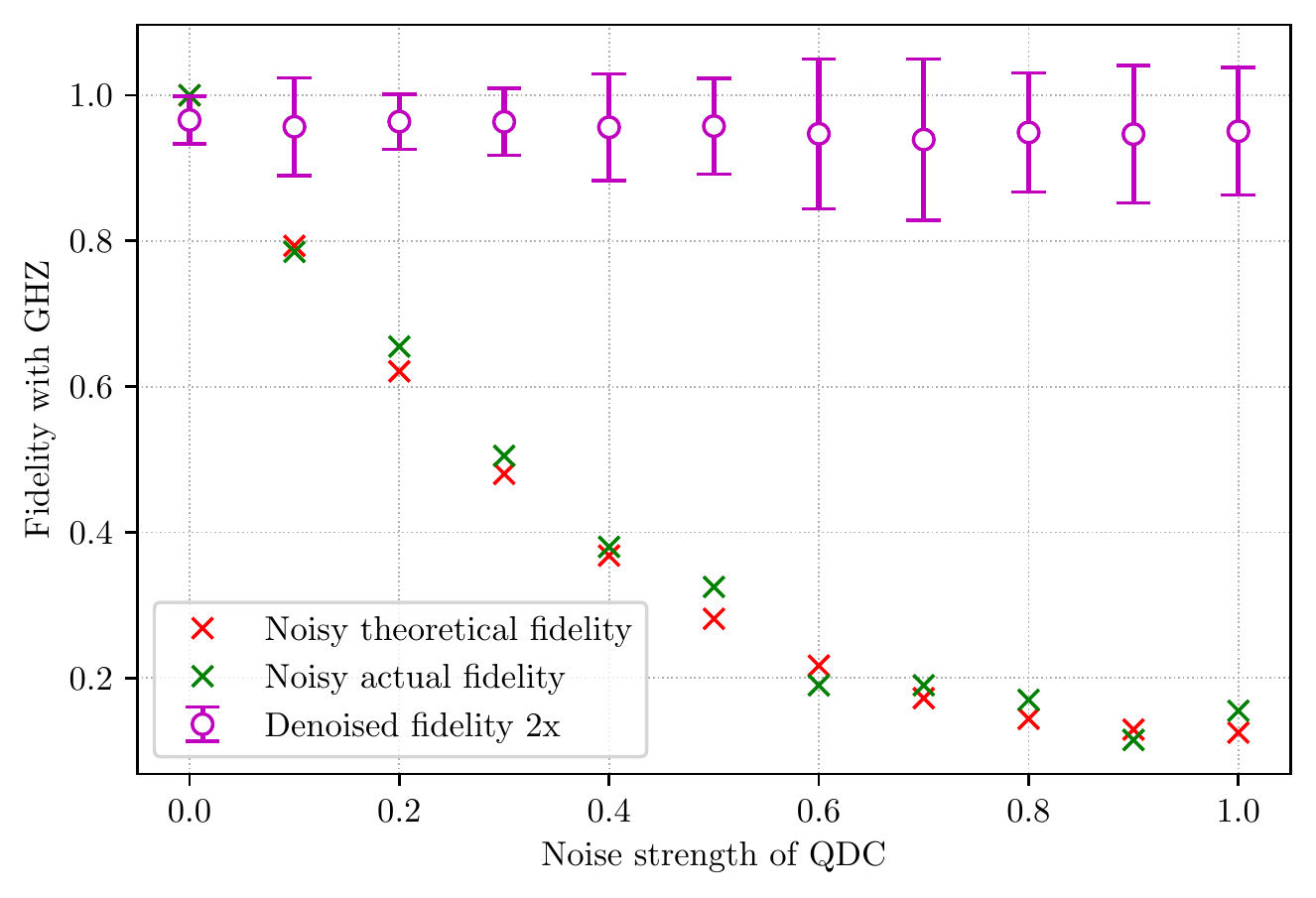}
\caption{Mean and deviation of the fidelity of the $[3,1,3]$ network, trained with $p = 0.2$ and applied twice, over different noise strength of the QDC. For each $p$, 200 test states were drawn. The green points are the fidelities of the test states. The red points are their theoretical fidelities.}
\label{fig:fid_depo32}
\end{figure}

Our $[3,1,3]$ QAE can also be used as a generative model. Running the same experiment as with the $[2,1,2]$ QAE, we found that the states created by our QAE achieved an average fidelity of $0.95 \pm 0.01$ with the 3-qubit GHZ state.
\medskip

To conclude this section, we can visualize the effect of our $[2,1,2]$ QAE on noisy states, using the 'States City' plot feature of Qiskit (see Figure \ref{fig:states_city}). It can be observed how the QAE is re-heightening the relevant states, i.e. the states corresponding to the density matrix of the GHZ state.

\begin{figure*}
\begin{subfigure}{.5\textwidth}
  \centering
  \includegraphics[width=0.9\textwidth]{./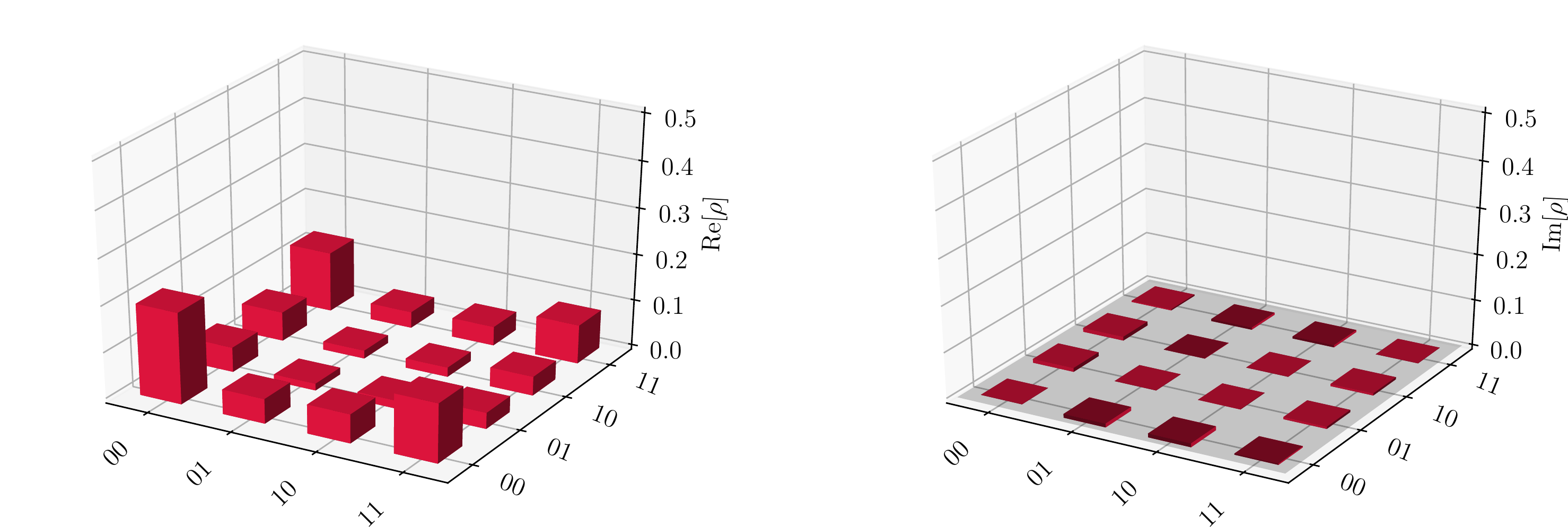}
  \caption{Noisy states city}
\end{subfigure}
\begin{subfigure}{.5\textwidth}
  \centering
  \includegraphics[width=0.9\textwidth]{./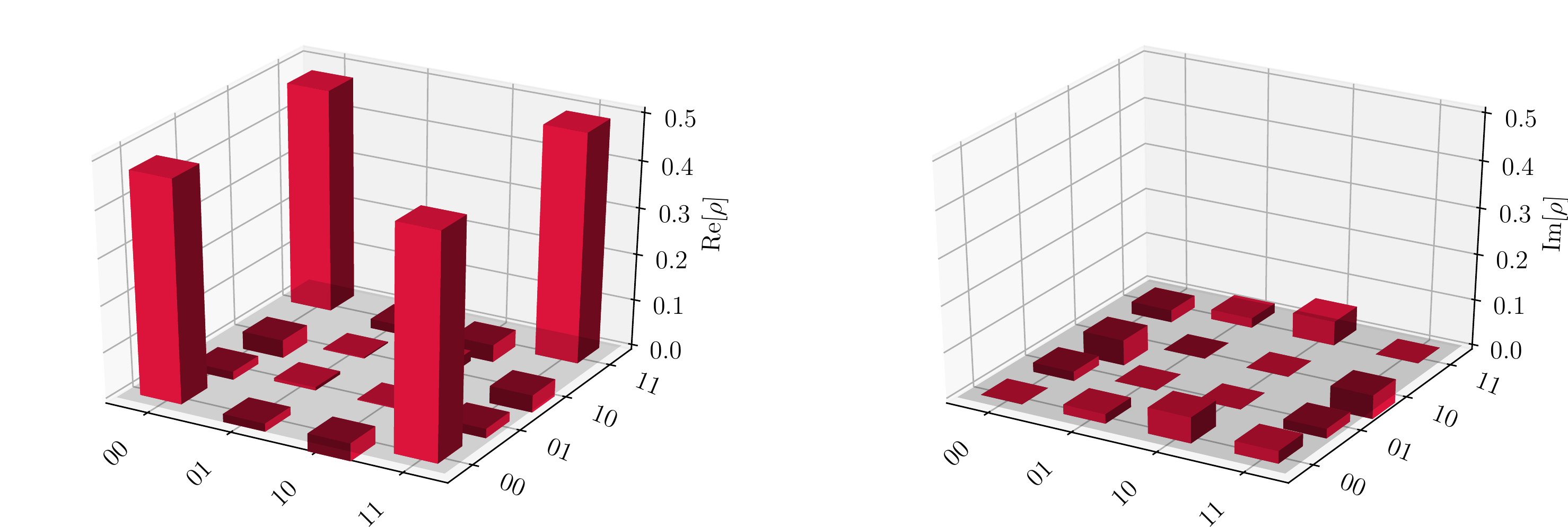}
  \caption{Denoised states city}
\end{subfigure}
\caption{States city of noisy and denoised states. We used 200 GHZ states affected by the QDC with noise strength $p=0.4$. The states city are respectively the average initial state vector and the average density matrix output by the QAE.}
\label{fig:states_city}
\end{figure*}

\subsection{Noise Robustness}

It is interesting to assess the robustness of our QAEs to internal noise, as it is ubiquitous in a real quantum device. To do so, we introduce a Gaussian noise in the unitary gates composing our QAEs. Using Eq. \eqref{eq:unitaries}, the addition of a noisy term $\delta$ leads to the following equation for the noisy gates :

\begin{align}
    \widetilde{U} = e^{i(K + \delta)} = e^{i\delta} \cdot U
\end{align}

For a meaningful comparison, the introduced noise $\delta$ has to be significantly smaller than $K$, which was found in previous simulations to be about $-0.3 \pm 0.25$. We choose a Gaussian noise $\delta$ centered on $0$, and tested the robustness of the QAE while increasing its standard deviation. The results are illustrated in Figures \ref{fig:noise_rob2} and \ref{fig:noise_rob3}, respectively for the [2,1,2] and [3,1,3] QAEs. In each case, 200 states are picked from the noisy QDC with noise strength $p = 0.3$. The results show a significant robustness to noise, as the QAEs still increase the fidelity of the denoised states, until the standard deviation of the gates' noise reaches roughly one third of the average value of the coefficients $K$. Furthermore, Figures \ref{fig:noise_rob2} and \ref{fig:noise_rob3} display that the GHZ states are almost perfectly reconstructed by the QAEs as long as the standard deviation remains below 0.03.

\begin{figure}[ht]
\centering
\includegraphics[scale = 0.5]{./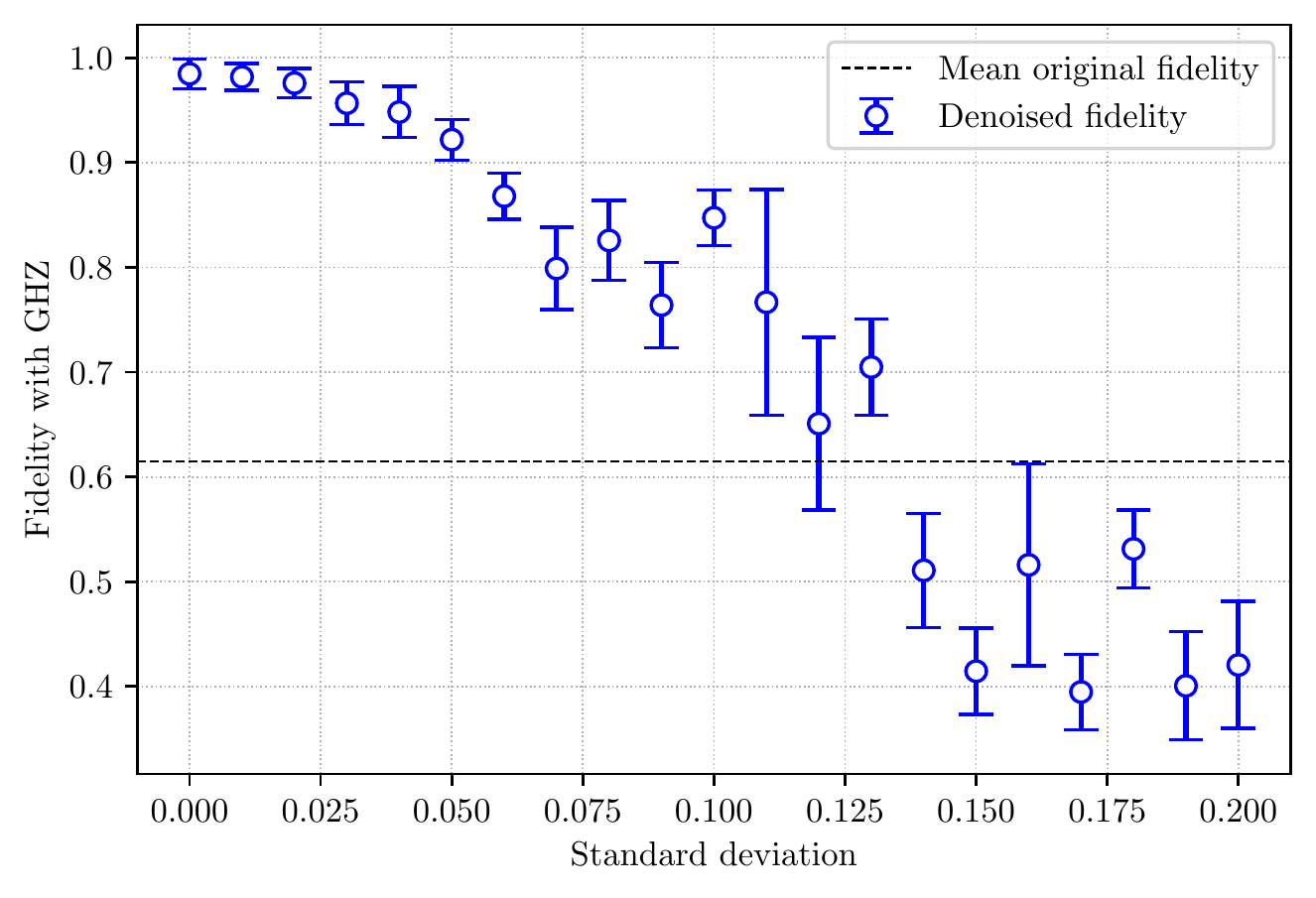}
\caption{Noise robustness of the [2,1,2] QAE}
\label{fig:noise_rob2}
\end{figure}

\begin{figure}[ht]
\centering
\includegraphics[scale = 0.5]{./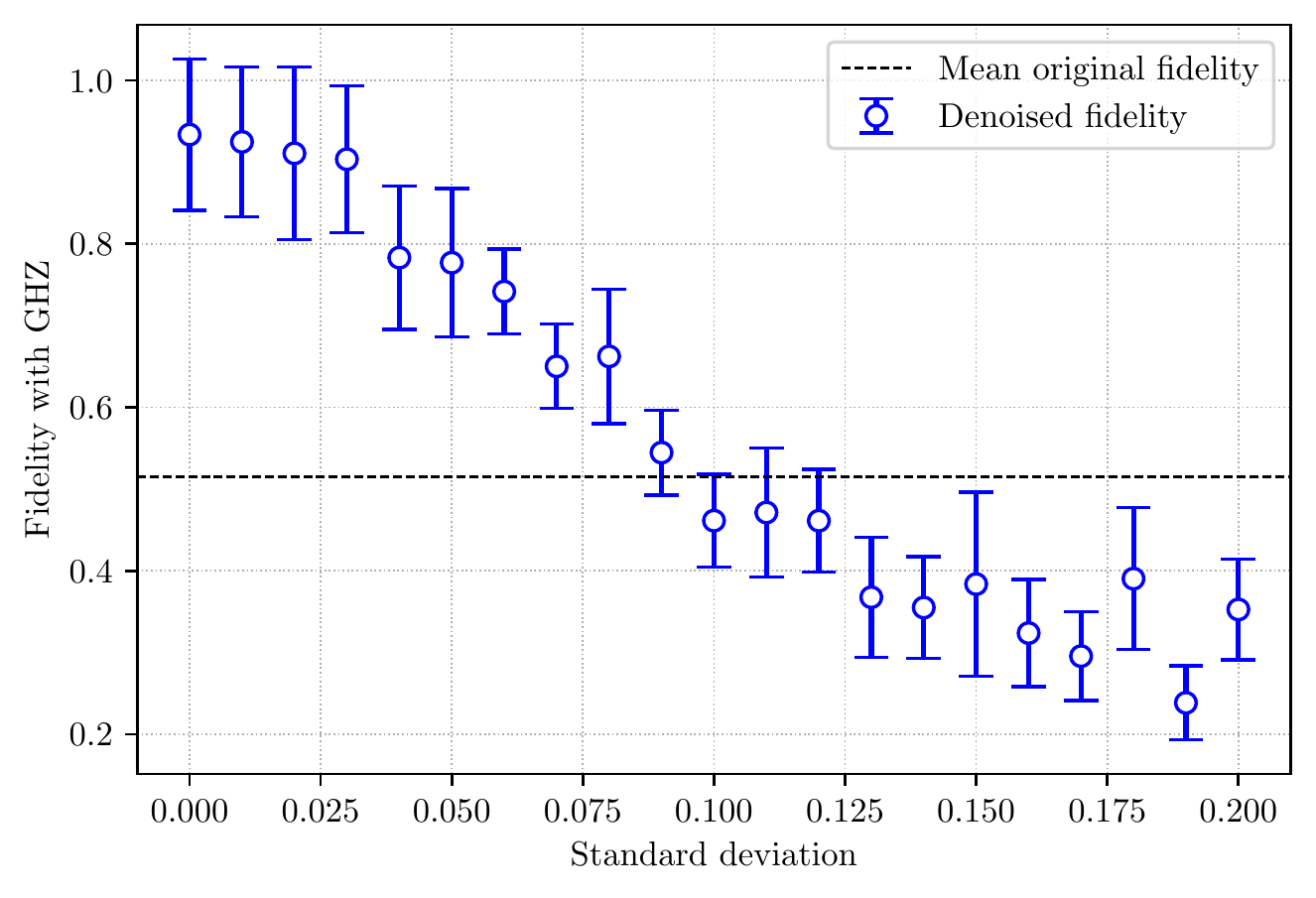}
\caption{Noise robustness of the [3,1,3] QAE}
\label{fig:noise_rob3}
\end{figure}

\subsection{Findings}

In this section, we have demonstrated that the QAEs studied in this paper are able to (almost) perfectly denoise GHZ states, and at the same time generate them without explicitly being given the goal state. Furthermore, we have shown that these QAEs are extremely robust to internal noise.

We emphasize once again that during the whole training phase, the QAE only sees noisy GHZ states, both on the input and on the target. It is then able to perceive the underlying structure behind all these noisy states, and to recompose it.

\section{Application : Quantum Secret Sharing} \label{QSS}

We detailed the creation and training of QAEs. We now discuss a practical application of them in the context of QSS. We analyze the limitation noise confers to QSS and show how QAEs can save QSS protocols by dampening the impact of noise. But first, we explain the basis of a QSS protocol.

\subsection{Preliminaries}

A Secret Sharing protocol is a method that allows a secret to be split between participants, so that each share does not give any information about the secret, but if combined together, they allow the participants to recover the secret. 

In a classical setting, for instance, a secret could be a two digit number (i.e. between $00$ and $99$). If we have two participants, one possible scheme would be to give one digit of the number to each participant. However, this scheme is not considered as a secure secret sharing protocol, as having a digit of a number gives considerable information about that number. In contrast to this insecure protocol, consider a scheme where a randomly drawn number is added to the secret and the sum is reduced modulo 100. We then give the first participant the result, and the second party the drawn number. In this example, none of the participants can learn anything about the secret alone. But by collaborating, they can fully retrieve it.

Formally, a secured Secret Sharing protocol must ensure that :
\begin{itemize}
    \item collaboration allows the participants to recover the secret
    \item shares alone do not allow any participant to reduce the search state space in which the secret lies
\end{itemize}

Quantum Physics helps building secure Secret Sharing protocols by distributing entangled states between participants. We refer to QSS as described in \cite{hillery1998quantum}. The interest of QSS is both to make it difficult for a malignant participant to cheat and for an eavesdropper to intercept information, as they will be (with a high probability) detected by the other participants.
\medskip

We consider here a GHZ triplet, shared between Alice, Bob and Charlie. Initially, Charlie possesses the whole triplet. He sends to Alice and Bob the first and second qubits, and then each of the parties makes a measurement in either the $\mathcal{X}$ or $\mathcal{Y}$ basis (randomly selected), defined by
\begin{align}
    \mathcal{X} &: \left\{ \begin{array}{c}
   \ket{+x} = \frac{1}{\sqrt{2}}(\ket{0} + \ket{1}) \\
   \ket{-x} = \frac{1}{\sqrt{2}}(\ket{0} - \ket{1})
\end{array} \right.,
\\ \mathcal{Y} &: \left\{ \begin{array}{c}
   \ket{+y} = \frac{1}{\sqrt{2}}(\ket{0} + i\ket{1}) \\
   \ket{-y} = \frac{1}{\sqrt{2}}(\ket{0} - i\ket{1})
\end{array} \right..
\end{align}

Once all measurements are made, each participant announces the basis he has chosen. As we will show, in half of the cases Alice and Bob will know the result of Charlie's measurement, provided that they collaborate (and they can not find it if they do not). They will then have a secret bit in common. By repeating this protocol multiple times, Charlie will be able to establish a secret key with Alice and Bob, that can be used, for instance, for sending a classical message encrypted with the shared key (like a One-Time-Pad \cite{miller1882telegraphic}).
\medskip

To show how Alice and Bob can know the result of Charlie's measurement, we can re-write the GHZ triplet in all the possible combinations of the $\mathcal{X}$ and $\mathcal{Y}$ basis. We have :

\begin{align}
    \ket{\psi} &= \frac{1}{\sqrt{2}} (\ket{000} + \ket{111}) \\
    &=\medmath{\frac{1}{2} \Big[ \big(\ket{+x}\ket{+x} + \ket{-x}\ket{-x} + \ket{+y}\ket{-y} + \ket{-y}\ket{+y} \big)\ket{+x}} \nonumber \\
    & ~~~~~ \medmath{+ \big(\ket{+x}\ket{-x} + \ket{-x}\ket{+x} + \ket{+y}\ket{+y} + \ket{-y}\ket{-y}\big)\ket{-x}} \nonumber \\
    &~~~~~\medmath{+ \big(\ket{+x}\ket{-y}+ \ket{-x}\ket{+y} + \ket{+y}\ket{-x} + \ket{-y}\ket{+x} \big)\ket{+y}} \nonumber \\
    &~~~~~\medmath{+ \big(\ket{+x}\ket{+y} + \ket{-x}\ket{-y} + \ket{+y}\ket{+x} + \ket{-y}\ket{-x} \big)\ket{-y} \Big]}.
\label{eq:qss}
\end{align}

Hence, we can see that the 'valid' bases (under which Charlie will be able to have a shared bit with Alice and Bob) are $\{\mathcal{XXX}, \mathcal{YYX}, \mathcal{XYY}, \mathcal{YXY} \}$ or simply when an even number of participants takes a measurement according to the $\mathcal{Y}$ basis (as measuring a basis in the other one leads to a probability of $1/2$ for each state of the basis). When the three participants have chosen one of the above bases, by sharing their results, Alice and Bob can know the result of Charlie, but they can't know it individually.

\subsection{Noisy QSS Protocol}

\begin{figure*}
\centering
\includegraphics[width = \textwidth]{./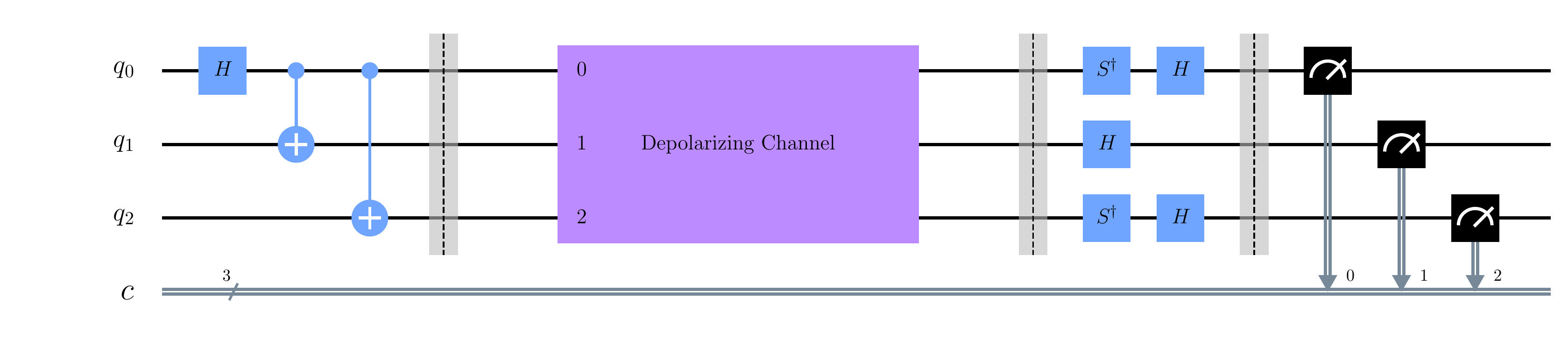}
\caption{Noisy QSS circuit. Here, the basis chosen is $\mathcal{YXY}$.}
\label{fig:QSS}
\end{figure*}

Now let us assume that participants do not have access to a noise-free GHZ triplet (for instance because of hardware constraints or shortcomings), or that the one they have has been corrupted by noise. In this section, we analyze how that will impact the protocol.
\medskip

To do so, we suppose that the GHZ triplet is corrupted by the QDC with strength $p$. This means that each of the qubits will be affected by a particular Pauli matrix with probability $\frac{p}{4}$ (for each of the three matrices), and will stay unharmed with probability $1 - \frac{3p}{4}$. Hence, to quantify the impact of the QDC, we only have to assess the effect of Pauli matrices on the basis $\mathcal{X}$ and $\mathcal{Y}$. We have 
\begin{align}
\left\{ \begin{array}{lcl}
   X \ket{\pm x} = \pm \ket{\pm x}&  &  X \ket{\pm y} = \pm i \ket{\mp y}\\
   Y \ket{\pm x} = \mp i \ket{\mp x}&  &  Y \ket{\pm y} = \pm \ket{\pm y} \\
   Z \ket{\pm x} = \ket{\mp x}&  &  Z \ket{\pm y} = \ket{\mp y}
\end{array} \right.
\end{align}
i.e. $X$ affects the $\mathcal{Y}$ basis, $Y$ affects the $\mathcal{X}$ basis, and $Z$ affects both. By 'affecting' the basis, we mean that it inverts the basis' states, which will disorder Eq. \eqref{eq:qss}. The coefficients are irrelevant here.

It then becomes straightforward to understand how the QDC affects the protocol. Let's suppose that we are in the case of a 'valid' basis, hence the result of this measurement is kept a part of the shared key. Let's further suppose that in the QDC, Alice's qubit is affected by $X$, and Bob and Charlie's qubits are lucky enough to stay still. Will the protocol fail ?

If Alice chooses to do her measurement according to the $\mathcal{X}$ basis, nothing will change, as $X$ leaves the $\mathcal{X}$ basis unharmed, so the protocol will succeed. However, if she chooses the $\mathcal{Y}$ basis, then the bit recovered by Alice and Bob (which is supposed to be Charlie's bit) will always be the wrong one. Indeed, let us assume for instance that Bob chooses the $\mathcal{Y}$ basis too and finds $\ket{-y}$, and Charlie chooses the $\mathcal{X}$ basis. If Alice finds $\ket{+y}$, then, by referring to Eq. \eqref{eq:qss}, Bob and her will assume that Charlie found $\ket{+x}$, when he actually found $\ket{-x}$, since Alice's measurement should have been $\ket{-y}$, had her qubit not been affected by $X$. Hence, in this case, the probability of getting the wrong bit (and thus, having the protocol failed) is 1/2. We can conduct the same reasoning for every other options (syndromes) in the QDC, which leads to the following result.

\begin{proposition}
If the QSS protocol is affected by the QDC with strength $p$, the probability $\Gamma$ of the protocol failing is
\begin{align}
    \Gamma = \frac{p}{2}(p^2 - 3p + 3).
\end{align}
\end{proposition}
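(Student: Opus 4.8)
The plan is to reduce the action of the quantum depolarizing channel on the three measured outcomes to a model of independent classical bit-flips, and then to characterize protocol failure purely through the parity of the number of flips. First I would translate the QDC into outcome flips. The text already records how each Pauli acts on the two measurement bases: a qubit measured in the $\mathcal{X}$ basis has its outcome inverted exactly when the applied Pauli is $Y$ or $Z$, while a qubit measured in the $\mathcal{Y}$ basis is inverted exactly when the Pauli is $X$ or $Z$. Since each qubit is independently hit by $X$, $Y$ or $Z$ with probability $\frac{p}{4}$ each, in \emph{either} basis the outcome flips with probability $\frac{p}{4}+\frac{p}{4}=\frac{p}{2}$. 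The point I would make precise is that this flip model is exact, not merely heuristic: in any valid basis the GHZ triplet is a uniform superposition of four distinct basis strings, and a product of single-qubit Paulis merely permutes the labels of those strings, introducing only irrelevant phases such as the $\pm i$ in $X\ket{\pm y}=\pm i\ket{\mp y}$. Because the four branches map bijectively onto four still-orthogonal strings, they cannot interfere, so measuring the corrupted state is statistically identical to measuring the clean state and then flipping each outcome independently with probability $\frac{p}{2}$.

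Next I would pin down the failure criterion. For each of the four valid bases one reads off from Eq.~\eqref{eq:qss} that the noiseless outcomes $(a,b,c)$ obey a fixed parity relation $a\oplus b\oplus c=\text{const}$ (even parity for $\mathcal{XXX}$, odd parity for $\mathcal{YYX}$, and so on); this relation is exactly what lets Alice and Bob reconstruct Charlie's bit $c$ from their own outcomes. Writing $e_A,e_B,e_C\in\{0,1\}$ for the independent flip indicators, the corrupted outcomes satisfy $a'\oplus b'\oplus c'=\text{const}\oplus(e_A\oplus e_B\oplus e_C)$, so Alice and Bob recover Charlie's bit correctly if and only if $e_A\oplus e_B\oplus e_C=0$. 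Hence the protocol fails exactly when an odd number of the three outcomes are flipped, a criterion identical across all four valid bases, so no conditioning on which valid basis was chosen is needed.

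Finally I would compute the probability of an odd number of flips. With three independent $\text{Bernoulli}(q)$ indicators and $q=\frac{p}{2}$, the odd-parity probability is $\Gamma=\binom{3}{1}q(1-q)^2+\binom{3}{3}q^3=3q(1-q)^2+q^3$; substituting $q=\frac{p}{2}$ and simplifying gives $\frac{p}{2}(p^2-3p+3)$. The main obstacle I anticipate lies entirely in the first step: rigorously justifying that the QDC collapses to independent, basis-independent outcome flips of probability $\frac{p}{2}$, and in particular checking that the phase factors produced by the Pauli actions never generate interference among the branches of the GHZ superposition. Once that reduction is secured, Steps~2 and~3 reduce to a short combinatorial computation.
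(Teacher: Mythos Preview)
Your proof is correct and takes a genuinely different route from the paper's. The paper conditions on the Pauli error hitting Charlie's qubit and, for each of the four cases $I,X,Y,Z$, sorts the sixteen Alice--Bob syndromes into buckets (never fails / fails half the time / always fails), obtains the conditional failure probabilities $\Gamma_I=p(1-\tfrac{p}{2})$, $\Gamma_X=\Gamma_Y=\tfrac12$, $\Gamma_Z=1-p+\tfrac{p^2}{2}$, and then averages over Charlie's error. Your reduction instead exploits the symmetry that, whether a qubit is measured in $\mathcal{X}$ or in $\mathcal{Y}$, exactly two of the three nontrivial Paulis invert the outcome, so each qubit independently suffers a $\mathrm{Bernoulli}(p/2)$ flip; failure is then simply the event of odd total flip parity, giving $3q(1-q)^2+q^3$ with $q=p/2$. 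This is shorter, avoids all case splits, and makes transparent both why the answer is the same for every valid basis and why it is a polynomial in $p/2$. The paper's enumeration is more concrete and sidesteps the need to argue, as you do, that the Pauli-induced phase factors are harmless. Your stated concern on that point is legitimate but easily discharged: for any fixed Kraus operator $P_A\otimes P_B\otimes P_C$, the four orthogonal product terms of the GHZ state in the chosen basis are mapped bijectively, with unit-modulus coefficients, onto four still-orthogonal product terms, so the measurement distribution is exactly the clean one with a deterministic relabeling; and since the QDC is a mixture over such Kraus operators rather than a coherent superposition, no cross-branch interference can arise.
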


\begin{proof}
Assuming that Charlie's qubit is left unharmed, we get for Alice and Bob :
\begin{itemize}
    \item $XI, YI, IX, IY, XX, YY, XY, YX, XZ, ZX$, $YZ, ZY \Longrightarrow$ the protocol fails half of the time
    \item $IZ, ZI \Longrightarrow$ the protocol always fails
    \item $II, ZZ \Longrightarrow$ the protocol never fails (we can see in Eq. \eqref{eq:qss} that if both signs of the two first qubits change, that will leave the protocol unharmed)
\end{itemize}

We can then obtain the probability $\Gamma_I$ of the protocol failing when Charlie's qubit is affected by $I$ (i.e. is unharmed) by multiplying the probabilities of failing and the probabilities of having each syndrome in the QDC :

\begin{align}
    \Gamma_I &= \frac{1}{2}\Big[ 4 \cdot \frac{p}{4}\big(1 - \frac{3p}{4} \big) + 8 \cdot \big(\frac{p}{4} \big)^2 \Big] + 2 \cdot\frac{p}{4} \big(1 - \frac{3p}{4} \big)\nonumber \\
    &= \frac{1}{2}\Big[p - \frac{3p^2}{4} + \frac{p^2}{2} \Big] + \frac{p}{2} - \frac{3p^2}{8} \nonumber \\
    &= p(1 - \frac{p}{2}).
\end{align}

We can apply the same method when Charlie's qubit is affected by $X$ or $Y$, which yields :

\begin{align}
    \Gamma_X = \Gamma_Y = \frac{1}{2}.
\end{align}

Finally, when Charlie's qubit is affected by $Z$, we have :

\begin{align}
    \Gamma_Z = 1 - p + \frac{p^2}{2}.
\end{align}

Putting everything together, we get the total probability $\Gamma$ of the protocol failing : 

\begin{align}
    \Gamma &= (1 - \frac{3p}{4}) \cdot \Gamma_I + \frac{p}{4} \cdot \Gamma_X + \frac{p}{4} \cdot \Gamma_Y + \frac{p}{4} \cdot \Gamma_Z \nonumber \\
    &= \frac{p}{2}(p^2 - 3p + 3).
\end{align}
\end{proof}

We simulated the QSS protocol on Qiskit, as can be seen in Figure \ref{fig:QSS} (making a measurement in the $\mathcal{X}$ basis is equivalent to applying $H$ and making a measurement in the $ \mathcal{Z} = \{\ket{0}, \ket{1}\}$ basis, and similarly for the $\mathcal{Y}$ basis by adding a $S^\dagger$ before $H$). 
\medskip

In every simulation, the bases are randomly chosen, and if they correspond to a valid configuration, the secret bit is kept for the key. We made $p$ vary between 0 and 1, ran 1000 measurements for each value of $p$, then computed the number of differences between the two keys, weighted by the key length. The results are shown in Figure \ref{fig:QSS_noisy}.

\begin{figure}[ht]
\centering
\includegraphics[width = 0.49 \textwidth]{./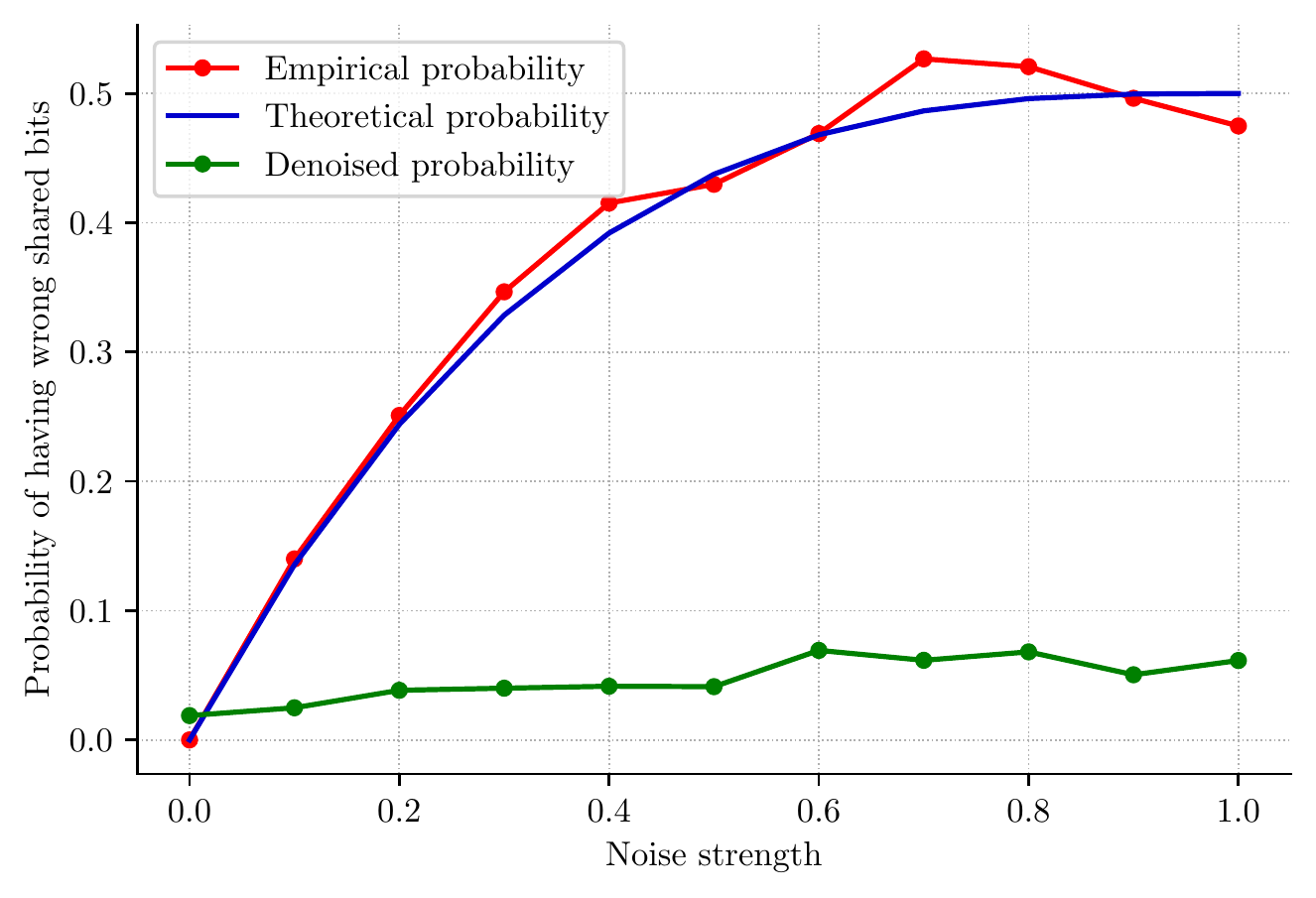}
\caption{Probabilities of the QSS protocol failing when submitted to the QDC with noise strength $p$. The blue line corresponds to $\Gamma$.}
\label{fig:QSS_noisy}
\end{figure}

The probability of failure increases with $p$, and at the limit $p=1$ (for a full-noise QDC), we have $\Gamma = \frac{1}{2}$, thus the protocol is not better than guessing a bit at random. Even a small value of $p$, like $p = 0.2$, will lead to $\Gamma = 0.24$, so on average approximately 1 bit out of 4 of the final secret key will be wrong, rendering it rather useless. 

Moreover, Charlie will not be able to detect if Alice or Bob is cheating, or if a malignant third party is trying to get a hold of the key, at it was the case in a non-noisy QSS protocol. Indeed, \cite{hillery1998quantum} showed that if Alice or Bob is cheating by getting both qubits and sending the other a qubit she/he carefully prepared, or if a third party is grafting an ancilla to the GHZ state to get any information, then it will irreducibly introduce errors. In a non-noisy QSS scheme, the participants can detect that, by revealing a fraction of the key, and see if there are any discrepancies. However, in a noisy QSS scheme, it will be challenging to determine if the error is due to a cheating act or simply due to the noisy channel. Although, cheating activities can still be detected, provided that the channel is not too noisy, and at the expense of a great number of simulations of the protocol, which may not be possible in practice. Indeed, the probability of the protocol failing in the presence of a cheater is 1/4. Hence, if the noise strength of the channel leads to a fewer probability of errors (for instance, when it is fewer than $0.2$ in the case of the QDC), then the participants can reveal a fraction of their keys and see if the mismatch probability is suspiciously high. However, that would require the fraction of the keys revealed to be large enough for a precise estimation, as variance can mislead the participants in their decision regarding the presence of an internal or external cheater.

\subsection{Denoising QSS}

We have seen in the previous section that noise is a serious limitation to QSS protocols. In this section, we show how QAEs can limit the impact of noise by drastically diminishing the probability of the protocol failing.
\medskip

To do so, Charlie can apply the QAE to denoise the triplet before it sends Alice and Bob their qubits. Alternatively, Charlie can directly use the QAE to create an almost perfect GHZ triplet, as demonstrated in section \ref{Results}. This will dampen, if not almost annihilate, the probability of having erroneous shared bits between Alice and Bob, and Charlie. 

The circuit used for this experiment is represented in Figure \ref{fig:denoised_QSS}. We use our $[3,1,3]$ QAE trained on the QDC. Qubit 0 is Alice's qubit, qubit 1 is Bob's qubit, qubit 2 is Charlie's qubit, and qubit 3 is an ancilla used by the QAE (remember that the width of our QAEs is defined by $\max\limits_{1 \leq i \leq \ell - 1}(m_i + m_{i+1}) = 4$ here). At the output of the QAE, the state is encoded in qubit 0, 1 and 2 (but that may not always be the case, for instance if the QAE had shape $[3,1,1,3]$, the state would be encoded in qubits 1, 2 and 3).

\begin{figure*}
\includegraphics[scale = 0.34]{./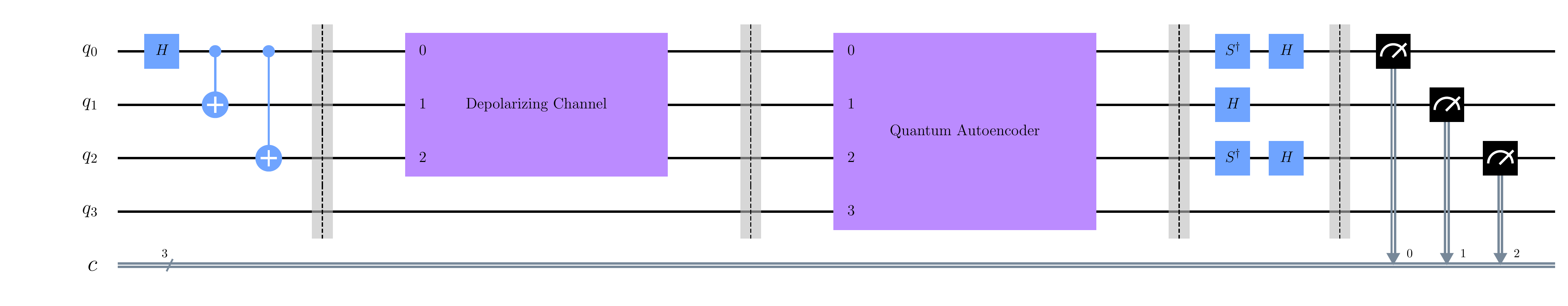}
\caption{Quantum circuit to denoise the QSS protocol}
\label{fig:denoised_QSS}
\end{figure*}

The results are shown in Figure \ref{fig:QSS_noisy}, for a noise strength ranging from 0 to 1. As one can see, the QAE manages to substantially reduce the failure probability regardless of the noise strength, hence saving the QSS protocol.

\section{Conclusion}
We successfully implemented the QAE from \cite{bondarenko2019quantum}, using Python and Qiskit. Additionally, we demonstrated the QAE denoising ability on GHZ states for multiple noise channels. The network is able to effectively denoise all of the given channels for various noise strengths, and is itself resistant to noise. Finally, we have shown that after being trained, the QAE is able to function as a generative model for the denoised state. It also proved itself particularly useful in the context of QSS. Other advantages of the QAE lie in its compactness and in the fact that it can be applied instantaneously once trained, despite a lengthy training process.

There are many avenues worth exploring with this QAE in the future. In particular, since the network can function as a quantum error correcting code (QECC), it would be interesting to benchmark it again other state-of-the-art QECCs, notably to compare its denoising accuracy as well as the time and resources involved.

In this paper, we were limited by computational resources, so we could only fully test $[2,1,2]$ and $[3,1,3]$ networks. In the future, it would be useful to explore the effectiveness of the network with different topologies.

\section*{Code}

The code to create, train and test the QAEs, as well as for the QSS part can be found at \url{https://github.com/Tom-Achache/QAEs}.

\bibliography{bib}
\bibliographystyle{ieeetr}

\end{document}